\newtheorem{theorem}{Theorem}
\newtheorem{lemma}[theorem]{Lemma}
\newtheorem{proposition}[theorem]{Proposition}
\newtheorem{corollary}{Corollary}
\newtheorem{assumption}{Assumption}
\theoremstyle{definition}
\newtheorem{remark}{Remark}
\newlist{todolist}{itemize}{2}
\setlist[todolist]{label=$\square$}
\pgfplotsset{compat=newest}
\pgfplotsset{plot coordinates/math parser=false}
\newlength\figureheight
\newlength\figurewidth
\pgfplotsset{compat=newest}
\pgfplotsset{plot coordinates/math parser=false}
\newcommand{\splev}{{K_0}}
\newcommand{\bb}{\mathbb}
\newcommand{\bs}{\boldsymbol}
\newcommand{\bc}[1]{\boldsymbol{\mathcal #1}}
\newcommand{\cl}{\mathcal}
\newcommand{\wt}{\widetilde}
\newcommand{\st}{%
  \ifmmode% math mode
  \ {\rm s.t.}\ %
  \else%
  \emph{s.t.}\@\xspace%
  \fi%
}
\newcommand{\whp}{\emph{w.h.p.}\xspace}
\newcommand{\ie}{\emph{i.e.}, }
\newcommand{\eg}{\emph{e.g.}, }
\newcommand{\aka}{\emph{a.k.a.} }
\newcommand{\rmc}{\mathrm{c}}
\newcommand{\rmh}{\mathrm{h}}
\newcommand{\Rbb}{\bb{R}} % Real ensemble
\newcommand{\Cbb}{\bb{C}} % Complex ensemble
\DeclareMathOperator{\vect}{vec} % Identity
\DeclareMathOperator{\Id}{{\rm \bf I}} % Identity
\newcommand{\tinv}[1]{{\textstyle\frac{1}{#1}}}
\newcommand{\im}{\mathrm{i}\mkern1mu} % Complex number i
\newcommand{\conj}[1]{#1^{*}} % Conjuguate transpose
\newcommand{\proba}[2][]{
  {
    \ifx&#1& \bb{P}\left[#2\right] 
    \else\bb{P} #1[#2 #1]\fi
  }
}
\newcommand{\E}[2][]{\ifthenelse{ \equal{#2}{} }{\bb{E}_{#1}}{\bb{E}_{#1}\left[#2\right]}} % Expectation
\newcommand{\iid}{%
  \ifmmode% math mode
  \mathrm{i.i.d.}%
  \else%
  i.i.d.\@\xspace%
  \fi%
}
\newcommand{\simiid}{\sim_{\iid}}
\DeclareSymbolFont{cyrletters}{OT2}{wncyr}{m}{n}
\DeclareMathSymbol{\Sha}{\mathalpha}{cyrletters}{"58}
\newcommand{\ud}{\mathrm{d}} % Infinitesimal part for integration
\newcommand{\supp}{{\rm supp}\,} % Support
\DeclareMathOperator{\tr}{tr} % Trace
\DeclareMathOperator{\diag}{diag}
\newcommand{\dmat}[1]{#1_\text{d}} % matrix to matrix diagonal operation
\newcommand{\transpose}[1]{#1^\top} % Transpose
\newcommand{\norm}[3][]{#1\lVert#2#1\rVert_{#3}} % p-norm
\newcommand{\scp}[3][]{#1\langle #2, #3 #1\rangle} % Scalar product
\newcommand{\fro}[3][]{#1\langle #2, #3 #1\rangle_{\text{F}}} % Frobenius inner product
\DeclareMathOperator*{\argmin}{arg\,min}
\newcommand{\upto}[1]{\llbracket #1 \rrbracket} % i = 1,...,N => i \in \upto{N}
\newcommand{\RIP}{\text{RIP}_{\ell_2/\ell_2}}
\newcommand{\RIPto}{\text{RIP}_{\ell_2/\ell_1}}
\newcommand{\intMa}[1][]{
   \ifthenelse{ \equal{#1}{} }{\bs{\cl I}}{\cl I_{#1}}
}
\newcommand{\intMOm}[2][]{
   \ifthenelse{ \equal{#2}{} }{\bs{\cl I}_\Omega}{\bs{\cl I}_\Omega #1[ #2 #1]} % interferometric matrix
}
\newcommand{\intMOmT}[2][]{
   \ifthenelse{ \equal{#2}{} }{\bs{\cl I}_\Omega^*}{\bs{\cl I}_\Omega^* #1[ #2 #1]} % interferometric matrix
}
\newcommand{\intMOmb}[2][]{
   \ifthenelse{ \equal{#2}{} }{\bs{\cl I}_{\Omega_b}}{\bs{\cl I}_{\Omega_b} #1[ #2 #1]} % interferometric matrix at STI m
}
\newcommand{\intMOmt}[2][]{
   \ifthenelse{ \equal{#2}{} }{\bs{\cl I}_{\Omega(t)}}{\bs{\cl I}_{\Omega(t)} #1[ #2 #1]} % interferometric matrix at STI m
}
\newcommand{\dintMOm}[2][]{
   \ifthenelse{ \equal{#2}{} }{\tilde{\bs{\cl I}}_\Omega}{\tilde{\bs{\cl I}}_\Omega #1[ #2 #1]} % discrete interferometric matrix
}
\newcommand{\intCirc}[1][]{
   \ifthenelse{ \equal{#1}{} }{\bs{\cl J}}{\cl J_{#1}} % Circulant Fourier matrix J
}
\newcommand{\dintCirc}[1][]{
   \ifthenelse{ \equal{#1}{} }{\tilde{\bs{\cl J}}}{\tilde{\cl J}_{#1}} 
   % Block-Circulant Fourier matrix J~
}
\newcommand{\ropA}{\bs{\cl A}} % global rop operator
\newcommand{\ts}{\textstyle}
\newcommand{\skyvar}{\theta_{\rm c}} %% variance of the sky
\newcommand{\intmap}{\theta_{\rm c}} %% continuous intensity map
\newcommand{\vintmap}{\theta_{\rm v}} %% vignetted intensity map
\newcommand{\intdisc}{\bs\theta} %% discretized vignetted intensity map
\newcommand{\intdiscel}{\theta} %% one element of \intdisc (removing the bold)
\newcommand{\btoB}{_{b=1}^B}
\newcommand{\CRIop}{\bs{RG}_0\bs F}
\newcommand{\acqop}{\bs\Psi}
\newcommand{\acqopn}{\wt{\bs\Psi}}
\newcommand{\imop}{\bs\Phi}
\newcommand{\imopn}{\wt{\bs\Phi}}
\newcommand{\covmat}[1][]{
   \ifthenelse{ \equal{#1}{} }{\bs C}{C_{#1}} % covariance matrix
}
\newcommand{\sampcovmat}[1][]{
   \ifthenelse{ \equal{#1}{} }{\bs C}{C_{#1}} % sample covariance matrix
}
\renewcommand{\leq}{\leqslant}
\renewcommand{\geq}{\geqslant}
\newcommand{\mathleft}{\@fleqntrue\@mathmargin0pt}
\newcommand{\mathcenter}{\@fleqnfalse}
\definecolor{NavyBlue}{RGB}{47,113,194}
\definecolor{greenmat}{RGB}{0,136,43}
\definecolor{org}{RGB}{222,106,16}
\definecolor{greenmat2}{RGB}{0,166,43}
\definecolor{org2}{RGB}{240,100,16}
\definecolor{MH}{RGB}{160,160,0}
\definecolor{SS}{RGB}{200,60,120}
\newcommand{\replace}[2][\@empty]
{ \ifbool{replace}
  {
  #2
  }
  {
  \ifx\@empty#1\relax \textcolor{greenmat2}{[BY]: #2} 
  \else\textcolor{org2}{[REPLACE]: #1} \\ \textcolor{greenmat2}{[BY]: #2}\fi
  }
}
\newcommand{\br}[1]{\textcolor{red}{[\textbf{BR:#1}]}}
\renewcommand{\br}[1]{}
\IEEEoverridecommandlockouts\IEEEpubid{\makebox[\columnwidth]{ 978-1-6654-4442-2/21 /\$31.00~\copyright~2021 IEEE \hfill} \hspace{\columnsep} \makebox[\columnwidth]{ }}
\title{Compressive radio-interferometric sensing with random beamforming as rank-one signal covariance projections}
\author{Olivier Leblanc\IEEEauthorrefmark{1}, Yves Wiaux\IEEEauthorrefmark{2}, and Laurent Jacques\IEEEauthorrefmark{1}
\thanks{\IEEEauthorrefmark{1} E-mail: {\em \{o.leblanc, 
laurent.jacques\}@uclouvain.be}. ISPGroup, INMA/ICTEAM, UCLouvain, 
Louvain-la-Neuve, Belgium. OL is funded by Belgian National 
Science Foundation (F.R.S.-FNRS). Part of this research is funded by the Fonds de la Recherche Scientifique -- FNRS under Grants n$^\circ$ T.0136.20 
(Learn2Sense) and T.0160.24 (QuadSense).} 
\thanks{\IEEEauthorrefmark{2}Heriot-Watt University, Edinburgh, UK.}}
\begin{document}
%TC:ignore

%\setcounter{page}{398}

\maketitle

\pagestyle{others}
\thispagestyle{firstpage}

\begin{abstract}
Radio-interferometry (RI) observes the sky at unprecedented angular resolutions, enabling the study of several far-away galactic objects such as galaxies and black holes. In RI, an array of antennas probes cosmic signals coming from the observed region of the sky. The covariance matrix of the vector gathering all these antenna measurements offers, by leveraging the Van Cittert-Zernike theorem, an incomplete and noisy Fourier sensing of the image of interest. The number of noisy Fourier measurements---or \emph{visibilities}---scales as $\cl O(Q^2B)$ for $Q$ antennas and $B$ short-time integration (STI) intervals. We address the challenges posed by this vast volume of data, which is anticipated to increase significantly with the advent of large antenna arrays, by proposing a compressive sensing technique applied directly at the level of the antenna measurements. First, this paper shows that \emph{beamforming}---a common technique of dephasing antenna signals---usually used to focus some region of the sky, is equivalent to sensing a rank-one projection (ROP) of the signal covariance matrix. We build upon our recent work~\cite{leblanc24} to propose a compressive sensing scheme relying on random beamforming, trading the $Q^2$-dependence of the data size for a smaller number $P$ of ROPs. We provide image recovery guarantees for sparse image reconstruction. Secondly, the data size is made independent of $B$ by applying $M$ random modulations of the ROP vectors obtained for the STI. The resulting sample complexities, theoretically derived in a simpler case without modulations and numerically obtained in phase transition diagrams, are shown to scale as $\cl O(K)$ where $K$ is the image sparsity. This illustrates the potential of the approach. 
\end{abstract}

\begin{IEEEkeywords}
    radio-interferometry, 
    rank-one projections, 
    interferometric matrix, 
    inverse problem, 
    computational imaging
\end{IEEEkeywords}

\section{Introduction}
\label{sec:intro}

% What is RI and why is it useful
Radio-interferometry allows observations of the sky with fine angular resolutions and sensitivities in radio astronomy. The covariance of signals acquired at antennas placed strategically around the Earth yields a noisy and nonuniformly partial Fourier sampling of the image of interest, \ie a small portion of the celestial sphere accessible to the instrument. RI is fundamental in studying a wide range of astronomical phenomena, including star formation, galaxy evolution, and the cosmic microwave background.

% Why compressing the visibilities is important
A tremendous amount of data is processed and collected every day. This includes the Fourier samples contained in all the covariance matrices, which are consecutive in time due to the rotation of the Earth. For the low-frequency array (LOFAR), this represents around 5 petabytes ($5~10^{15}$ bytes) of data per year~\cite{van_haarlem13}. 

Compression is therefore becoming increasingly essential for reducing data size and ensuring the scalability of RI, particularly with upcoming arrays like the \emph{Square Kilometer Array}~\cite{braun19ska}, which will involve thousands of antennas. The issue faced by \textit{post-sensing} compression techniques is that they require computing the signal covariance matrix to compress it afterward, hence temporarily storing the uncompressed data. Furthermore, the compression mechanism often impedes the forward model calculation, repeatedly called in iterative reconstruction algorithms.

% What we do in this paper
In this work, we present a \emph{compressive radio-interferometric} (CRI) sensing scheme with \emph{(i)} a cheap acquisition computation, \emph{(ii)} a minimal sample complexity, and \emph{(iii)} supported by an established theoretical background. Namely, we highlight the correspondence between beamforming and ROP applied to the covariance matrix of the antenna measurement vector in order to compress the visibilities directly at the level of the antennas without going through the computation of this covariance matrix. The proposed two-layer compression scheme first consists of applying ROP to the covariance matrices of each short-time integration (STI) interval. Then, the concatenated ROP vectors are further compressed by applying random modulations across the STI intervals.  

This paper highlights the novel aspects of the acquisition process by introducing a compressive sensing operator $\acqopn$ and subsequently develops the corresponding forward imaging model $\imopn$, along with the associated recovery guarantees. A more detailed exploration of the computational complexity of the forward model, as well as image reconstruction methodologies, is reserved for a parallel submission.

%%%%%%%%%%%%%%%%%%%%%%%%%%%%%%%%%%%%%%%%%%%%%%%%%%%%%%%%%%
\subsection{Related Work} \label{sec:related}

There exist several compression techniques in the literature, either \textit{post-sensing} compression\footnote{A post-sensing compression technique necessitates computing the uncompressed data before applying compression.} or \textit{compressive sensing}, that have a connection with our method.

%%% Post-sensing compression
Among the post-sensing compression techniques for RI, compressing the observation vector that collects all the visibilities by multiplication with a Gaussian random matrix (with much fewer rows than columns) was discussed in~\cite{kartik17} and shown to significantly increase the computational cost of the forward model.~\cite{kartik17} proposed an efficient Fourier reduction model approximating the optimal (in the least-squares sense) dimensionality reduction that projects the data with the adjoint of $\bs U$ which contains the left singular vectors of the SVD of the visibility operator $\bs\Phi:=\bs U \bs\Sigma\bs V^*$. On another hand, \emph{baseline-dependent averaging}~\cite{wijnholds18,atemkeng18} offers an effective reduction in data volume by averaging consecutive visibilities corresponding to short baselines, \ie those with nearly constant frequency locations. Finally, as most of the reconstruction algorithms involve the matrix-to-vector multiplication with the matrices $\bs\Phi$ and $\bs\Phi^*\bs\Phi$, storing the \emph{dirty map}---that is the mere application of the adjoint sensing operation to the observed visibilities---instead of the visibility vector was considered as a practical compression technique~\cite{kartik17}. It can even be computed \textit{on-the-fly} during the acquisition~\cite{thyagarajan17,krishnan23}. \\

%%% Compressive sensing
Our scheme is a \emph{compressive sensing} technique relying on \emph{random beamforming}, which had already been highlighted in~\cite{aecal15}. Our decomposition of the beamforming capabilities into direction-dependent gains per antenna, then direction-agnostic gains per antenna for the projection of the measurement vector, corresponds to the random beamforming strategy R3 described in~\cite{aecal15}. The novelty in this paper is the emphasis on the theory of ROP, allowing us to derive the sample complexity in a simplified case. In the last decade, different works have provided matrix recovery guarantees via ROPs. In particular,~\cite{cai15} derived sample complexities for the recovery of low-rank matrices, showing that a matrix of rank $r \ll Q$ could be recovered from a number $n\gtrsim rQ^2$ of random ROPs, and~\cite{chen2015exact} studied the reconstruction of a signal covariance matrix from symmetric ROPs when the matrix satisfies one structural assumption among low-rankness, Toeplitz, sparsity, and joint sparsity and rank-one. Furthermore, we tackled the challenge regarding time-dependence by proposing \emph{random modulations} of the different rank-one projected vectors. Using random modulations for compressive imaging purposes is not new, and was considered, for instance, in coded aperture~\cite{fenimore78,wagadarikar08}, lensless imaging, and in binary mask schemes~\cite{koller15}. It is used for computer vision and robotics but also for astronomical and medical imaging applications. \\

In the context of compressive sensing, our CRI model is analogous to the one of \emph{Multicore Fiber Lensless Imaging} (MCFLI)~\cite{leblanc24}. In a nutshell; in both cases, the sensing model amounts to sampling the frequency content of the object of interest over a subset of frequencies related to the spatial configuration of fundamental components, \ie the antenna or core positions in CRI and MCFLI, respectively (this analogy is further developed in App.~\ref{app:MCFLI}).

%%%%%%%%%%%%%%%%%%%%%%%%%%%%%%%%%%%%%%%%%%%%%%%%%%%%%%%%%%
\subsection{Contributions} \label{sec:contributions}

We provide several contributions to the modeling, understanding, and efficiency of RI when combined with random beamforming. 

\paragraph{Random beamforming as a rank-one projection model} Random beamforming of antenna signals before computing their correlation is shown to be equivalent to applying random ROPs of the initial signal covariance matrix. Replacing the $Q^2$ coefficients of this matrix with a small---but sufficient---number $P$ of these projections for each STI represents a compression technique that can be applied on-the-fly during the acquisition of the antenna measurements. Moreover, as a direct consequence of the \emph{Van Cittert-Zernike} theorem~\cite{vancittert34,zernike38}, we show that in expectation with respect to the variability of the sky distribution (or assuming very long integration time), the ROPs of the covariance matrix equal the ROPs of an interferometric matrix gathering the visibilities of the sky distribution map.

\paragraph{Structured rank-one projections via random modulations} We propose to further compress the ROP vectors corresponding to consecutive STI intervals, or \emph{batches}, by progressively integrating them after applying random modulations; this allows keeping a fixed final data size $PM$ during the acquisition for $P$ ROPs per batch and $M$ modulations, reducing the data size compared to $PB$ ROP elements. 

\paragraph{Model analysis and recovery guarantees} Formal guarantees on the recovery of the sky intensity distribution are provided in Sec.~\ref{sec:recovery} for a sensing scheme---simpler to analyze theoretically---that sums (without modulations) the ROP vectors across batches, namely \emph{integrated ROP}. 
Nonetheless, integrated ROP (IROP) and modulated ROP (MROP) are shown to be both equivalent to ROPs applied to a total covariance matrix gathering all matrices coming from different batches in blocks along its diagonal. Correspondingly, in expectation with respect to the sky variability (or over long integration time), IROP and MROP are matching a ROP scheme applied to a global interferometric matrix gathering the visibilities over all STI intervals. To our knowledge, this is the first work showing such an equivalence and exploiting it in the context of RI. Guarantees adapted to MROPs are expected to exist but not proven here. 
From a set of simplifying assumptions, we show that one can with high probability (\whp) robustly estimate a $K$-sparse image provided that the number of ROPs $P$ and the Fourier coverage with $V := Q(Q-1)B$ visibilities guided by both the number of antennas $Q$ and the number of batches $B$ are large compared to $\cl O(K)$ (up to logarithmic factors). Our analysis relies on showing that, \whp, the sensing operator $\bs{RGF}$ satisfies the $\ell_2/\ell_1$ restricted isometry property which enables us to estimate a sparse image with the \ref{eq:BPDN} program. This theoretical analysis is accompanied by phase transition diagrams obtained from extensive Monte Carlo numerical experiments using the MROP model.

%%%%%%%%%%%%%%%%%%%%%%%%%%%%%%%%%%%%%%%%%%%%%%%%%%%%%%%%%%
\subsection{Notations and Conventions} \label{sec:notation}

Light symbols denote scalars (or scalar functions), and bold symbols refer to vectors and matrices (\eg $\eta \in \Rbb$, $g \in L_2(\Rbb)$, $\bs f \in \Rbb^N$, $\bs G \in \Cbb^{N 
\times N}$). We write $\im=\sqrt{-1}$; the identity operator (or $n \times n$ matrix) is $\Id$ (resp. $\Id_n$); the set of $Q \times Q$ Hermitian matrices in $\bb C^{Q \times Q}$ is denoted by $\cl H^Q$; the set of index components is $\upto M := \{1,\ldots,M\}$; $\{s_q\}_{q=1}^Q$ is the set $\{s_1, \ldots, s_Q\}$, and $(a_q)_{q=1}^Q$ the vector $(a_1,\ldots,a_Q)^\top$. The notations $\transpose{\cdot}$, 
$\conj{\cdot}$, 
$\tr$,  
$\times$,
$\scp{\cdot}{\cdot}$, 
$\otimes$,
correspond to the transpose, conjugate transpose, trace, cross product, inner product, and Kronecker product. 
The $q$-norm (or $\ell_q$-norm) is $\norm{\bs x}{q}:=  (\sum_{i=1}^n |x_i|^q)^{1/q}$ for $\bs x \in \bb C^n$ and $q\geq 1$, with ${\|\cdot\|}={\|\cdot\|_2}$, and $\norm{\bs x}{\infty} := \max_i |x_i|$. Given $\bs A \in \bb C^{n\times n}$, $\bs a \in \bb C^n$ and $\cl S \subset \upto{N}$, the matrix $\bs A_{\cl S}$ is made of the columns of $\bs A$ indexed in $\cl S$, the operator $\diag(\bs A) \in \bb C^n$ extracts the diagonal of $\bs A$,  $\bs D_{\bs a} := \diag(\bs a) \in \bb C^{n\times n}$ is the diagonal matrix such that $\diag(\bs a)_{ii} = a_i$, $\dmat{\bs A}=\diag(\diag(\bs A))$ zeros out all off-diagonal entries of $\bs A$, the \emph{hollow} version of $\bs A$ is $\bs A_\rmh := \bs A-\dmat{\bs A}$, and $\|\bs A\|$ is the operator norm of $\bs A$.  The direct and inverse 2-D continuous Fourier transforms are defined by $\hat g(\bs \chi) := \cl F[g](\bs \chi) := \int_{\bb R^2} g(\bs s) e^{-\im 2\pi \bs \chi^\top \bs s} \ud \bs s$, with $g: \bb R^2 \to \bb C^2$, $\bs\chi \in \bb R^2$, and $g[\bs s] = \cl F^{-1}[\hat g](\bs s) = 
\int_{\bb R^2} \hat{g}(\bs \chi) e^{ \im 2\pi \bs \chi^\top \bs s} \ud \bs \chi$. 
The notation $X_i \simiid \cl P$ indicates that the random variables $\{ X_i \}_{i=1}^N$ are \emph{independent and identically distributed} according to the distribution $\cl P$. The uniform distribution on a set $\cl A$ is denoted by $\cl U(\cl A)$. The expectation with respect to the random quantity $Q$ (scalar or vector) is written $\bb E_{Q}$. An estimate of a quantity of interest $q$ or $\bs q$ is denoted $\wt{q}$ or $\wt{\bs q}$, respectively.

Finally, anticipating on their complete explanation, we list in Table.~\ref{tab:list-key-notations} several key quantities used in this work as a reading guide.

\begin{table}[t]
\centering
\caption{Key quantities and dimensions used in this work.}
\scalebox{.7}{\begin{tabular}{|c|l|}
\hline
Key quantity&Summarized meaning\\
\hline
$Q$, $B$&Number of antennas, and short-time intervals (STI) or batches\\
$I$, $T$&Number of samples in each STI, and sampling time period\\
$V=Q(Q-1)B$&Number of visibilities for all batches\\
$P$, $M$&Number of ROPs (in IROP \& MROP) and modulations (MROP)\\
$s(\bs l,t)$&Cosmic signals in direction cosine $\bs l = (l,m)$\\
$\intmap(\bs l)$, $\vintmap(\bs l)$, &Complete, vignetted, \\
$\intdisc \in \bb R^N$&\ and  discretized sky distribution on a $N$-pixel grid\\
$\bs p_q(t)$, $x_q(t)$&Location and signal of the $q$-th antenna ($q \in \upto{Q}$)\\
$\bs x(t), \bs x_b[i] \in \bb C^Q$&Time-continuous and discretized antenna signals\\
$\bs C(t), \bs C_b$&Time-continuous and discretized sample covariance matrices\\
$\Omega(t)$, $\Omega_b$&Time-continuous and discretized antenna arrangements\\
$\cl V(t)$, $\cl V_b$&Time-continuous and discretized baseline sets\\
$\intMOmt{}$, $\intMOmb{}$&Time-continuous and discretized interferometric matrices\\
\hline
\end{tabular}}
\label{tab:list-key-notations}
\end{table}

%%%%%%%%%%%%%%%%%%%%%%%%%%%%%%%%%%%%%%%%%%%%%%%%%%%%%%%%%
\section{Acquisition and Sensing Model} \label{sec:model}
%%%%%%%%%%%%%%%%%%%%%%%%%%%%%%%%%%%%%%%%%%%%%%%%%%%%%%%%%

Here, we present three models compatible with radio interferometry. The first is the classical scheme computing frequency samples of the images, or \emph{visibilities}, from the signal covariance matrix. Then follow random Gaussian compression and baseline-dependent averaging; two post-sensing compression techniques acting on the visibilities. Finally, we propose a new compressive sensing scheme, coined MROP, occurring at the antenna level and circumventing the limitations raised in the other models.

%%%%%%%%%%%%%%%%%%%%%%%%%%%%%%%%%%%%% 
\subsection{Classical Acquisition:\newline\phantom{A.\ }From the Antenna Signals to the Visibilities} \label{sec:meas}

This section provides a recapitulation of the classical sensing scheme. The radio-interferometric measurements and the link of their covariance matrix to the visibilities are derived. Then we show how the visibilities are usually accumulated over $B$ batches in order to obtain a sufficiently dense Fourier sampling of the image of interest.  

Let us consider a context of radio-astronomical imaging, as depicted in Fig.~\ref{fig:21_schematic}, with $Q$ antennas\footnote{We will write ``antennas'' as a generic term to designate telescope dishes, single antennas or beamformed subarrays.} receiving complex Gaussian cosmic signals $s(\bs l,t) \simiid \cl C\cl N(0,\skyvar(\bs l))$ from the sky with power flux density distribution $\intmap(\bs l) > 0$ [W/m$^2$], \ie the sky intensity distribution. We write $\bs l=(l,m)$ the direction cosines of the portion of the sky looked from the array formed by the antennas. More precisely, a \emph{direction cosine} coordinate system fixed with its origin at the center of the Earth is chosen such that the center of the field to be imaged is denoted by the unit vector $\bs s_0 = (l,m,n) = (0,0,1)$. The other directions in the region of interest are denoted by $\bs s=\bs s_0+\bs\tau$ with $\bs\tau = (l,m,\sqrt{1-l^2-m^2})$. The region of interest will be considered sufficiently small to approximate it as a plane\footnote{The invalidity of this assumption can be addressed with a spread spectrum model, \ie by considering the partial Fourier transform of the image with a linear chirp modulation~\cite{cornwell08,wiaux09,dabbech21}.}, or equivalently $\sqrt{1-l^2-m^2}\approx 1$. The set $\Omega(t):=\{\bs p_q^\perp(t)\}_{q=1}^Q \subset \Rbb^2$ denotes the projection onto the plane perpendicular to $\bs s_0$ of the instantaneous position of the $Q$ antennas---moving in time due to the rotation of the Earth. Our reasoning is inspired by the framework of~\cite{VanderVeen18} with the following adaptations:
\begin{enumerate}[label=G\arabic*.]
    \item \label{enum:21_G1} We assume a monochromatic signal with a single wavelength $\lambda$ and associated frequency $f=\frac{c}{\lambda}$ with the speed of light $c$. The separation into frequency subbands can be done efficiently with filter banks~\cite{VanderVeen18}.
    \item \label{enum:21_G2} We consider the signals at instantaneous time $t$ and thus report their sampling to later.
    \item \label{enum:21_G3} We deal with a continuous sky intensity distribution $\intmap(\bs l)$ rather than a finite number of signal sources. On the time scale of the acquisition, the intensity distribution is stationary.
    \item \label{enum:21_G4} We give the explicit expression of the phase factors $a_q(\bs l,t)=e^{\frac{\im2\pi}{\lambda} \bs p_q^\perp(t)^\top \bs l}$ that inform on the position-dependent geometric delays.
    \item \label{enum:21_G5} We consider the same direction-dependent gain $g(\bs l)$ for all antennas. 
\end{enumerate}

\begin{figure}[t]
\centering 
\includegraphics[width=.9\linewidth]{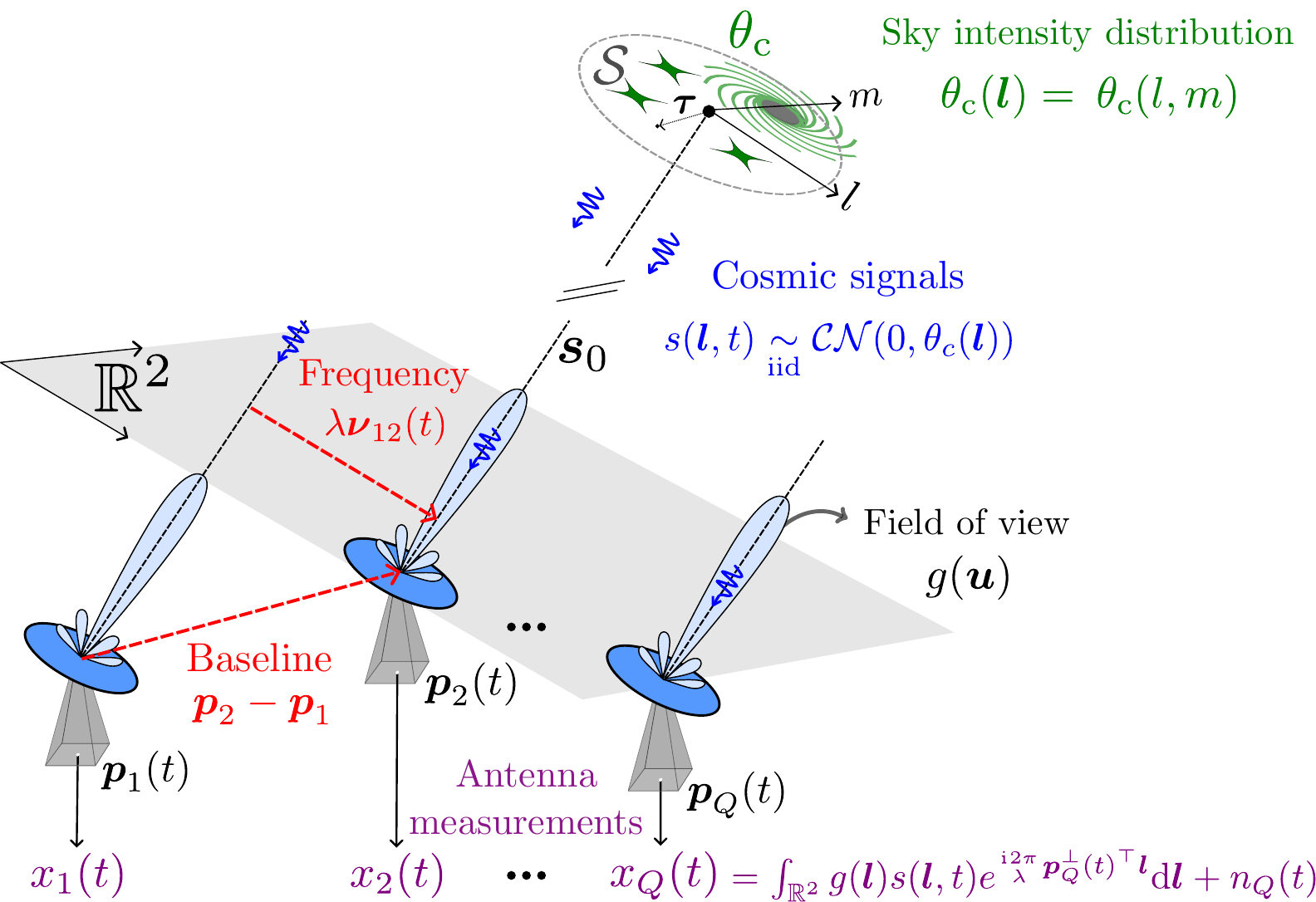}
\caption{Schematic of the radio-interferometry sensing context. Far-away cosmic signals following a Gaussian random process, \ie $s(\bs l,t)\sim_{\iid} \cl C\cl N(0,\skyvar(\bs l))$ with an intensity distribution $\intmap(\bs l)$, are received by $Q$ antennas. The antennas, positionned in $\Omega(t):=\{ \bs p_q^\perp(t) \}_{q=1}^Q$, have the same direction-dependent gain $g(\bs l)$ focusing a specific region $\cl S$ of the sky. Each antenna $q\in \upto{Q}$ integrates---with its own geometric delay $\bs p_q^\perp(t)^\top\bs l$---the cosmic signals from all direction into a noisy measurement $x_q(t)=\int_{\Rbb^2} g(\bs l)s(\bs l,t)e^{\frac{\im2\pi}{\lambda}\bs p_q^\perp(t)^\top\bs l} \ud\bs l + n_q(t)$.} 
\label{fig:21_schematic}
\end{figure}

Up to a compensation for the arrival time difference between individual antennas, we can always assume that all antennas lie in a plane perpendicular to the pointing direction $\bs s_0$.
Under the stated conditions,~\cite[Eq.~(9)]{VanderVeen18} which describes the temporal signal received by antenna $q \in \upto Q$ can be modified as $x_q(t) = \bar x_q(t) + n_q(t)$ with the noiseless measurements\footnote{The integration over $\Rbb^2$ is possible, despite $\bs l \in [0,1]^2$, because the gains $g$ will be zero outside $[0,1]^2$. }
\begin{equation} \label{eq:31_meas}
    \bar x_q(t) = \int_{\Rbb^2} g(\bs l) s(\bs l,t) e^{\frac{\im 2\pi}{\lambda} \bs p_q^\perp(t)^\top \bs l} \ud\bs l,
\end{equation}
where $s(\bs l,t) \simiid \cl{CN}(0,\skyvar(\bs l))$ and $\bs n(t) := (n_q(t))_{q=1}^Q \sim \cl{CN}(\bs 0,\bs\Sigma_{\bs n})$ are complex zero mean white Gaussian random processes with the assumption $\bb E [s(\bs l,t) s^*(\bs l',t)] = \skyvar(\bs l) \delta(\bs l-\bs l')$~\cite{VanderVeen18}.

We can stack the $Q$ received signals into an instantaneous measurement vector $\bs x(t) = (x_1(t),\ldots,x_Q(t))$, with the (Hermitian) \emph{covariance matrix} as $\covmat(t) := \bb E_s\E[\bs n]{\bs x(t) \bs x^*(t)} \in \cl H^Q$.
Leveraging the \emph{Van Cittert-Zernike} theorem~\cite{vancittert34,zernike38} and assuming that the same realization of the cosmic signals $s(\bs l,t)$ is received simultaneously at time $t$ for all antennas, the covariance matrix can be recast as 
\begin{equation} \label{eq:31_cov}
    \covmat(t) = \intMOmt{\vintmap} + \bs\Sigma_{\bs n}.
\end{equation}
In equation~\eqref{eq:31_cov}, $\bs\Sigma_{\bs n} := \E[\bs n]{\bs n(t)\bs n^*(t)}$ is the noise covariance and
\begin{equation} \label{eq:31_vign}
    (\intMOmt{\vintmap})_{jk} := \int_{\Rbb^2} \vintmap(\bs l) e^{\frac{-\im 2\pi}{\lambda} (\bs p_k^\perp(t)-\bs p_j^\perp(t))^\top \bs l} \ud\bs l
\end{equation}
is the $jk$-th entry of the \emph{interferometric matrix} of the map $\vintmap$---analogous to~\cite{leblanc24}---with the \emph{vignetted} map
$$
\vintmap(\bs l):=g^2(\bs l)\intmap(\bs l).
$$  
Overall,~\eqref{eq:31_cov} and~\eqref{eq:31_vign} show that RI corresponds to an interferometric system that is \emph{affine} in $\vintmap$. It is tantamount to sampling the 2-D Fourier transform of $\vintmap$ over frequencies selected in the difference multiset, 
\begin{equation} \label{eq:31_baselines}
\cl V(t) := \tinv{\lambda}(\Omega(t) - \Omega(t)) = \{ \bs \nu_{jk}(t) := \tfrac{\bs p_j^\perp(t) - \bs p_k^\perp(t)}{\lambda} \}_{j,k=1}^Q,
\end{equation}
\ie $(\intMOm{\vintmap}(t))_{jk} = \cl F[\vintmap](\bs \nu_{kj}(t))$, then adding the covariance matrix of the noise $\bs\Sigma_{\bs n}$.

\begin{figure}[t]
\centering 
\includegraphics[width=\linewidth]{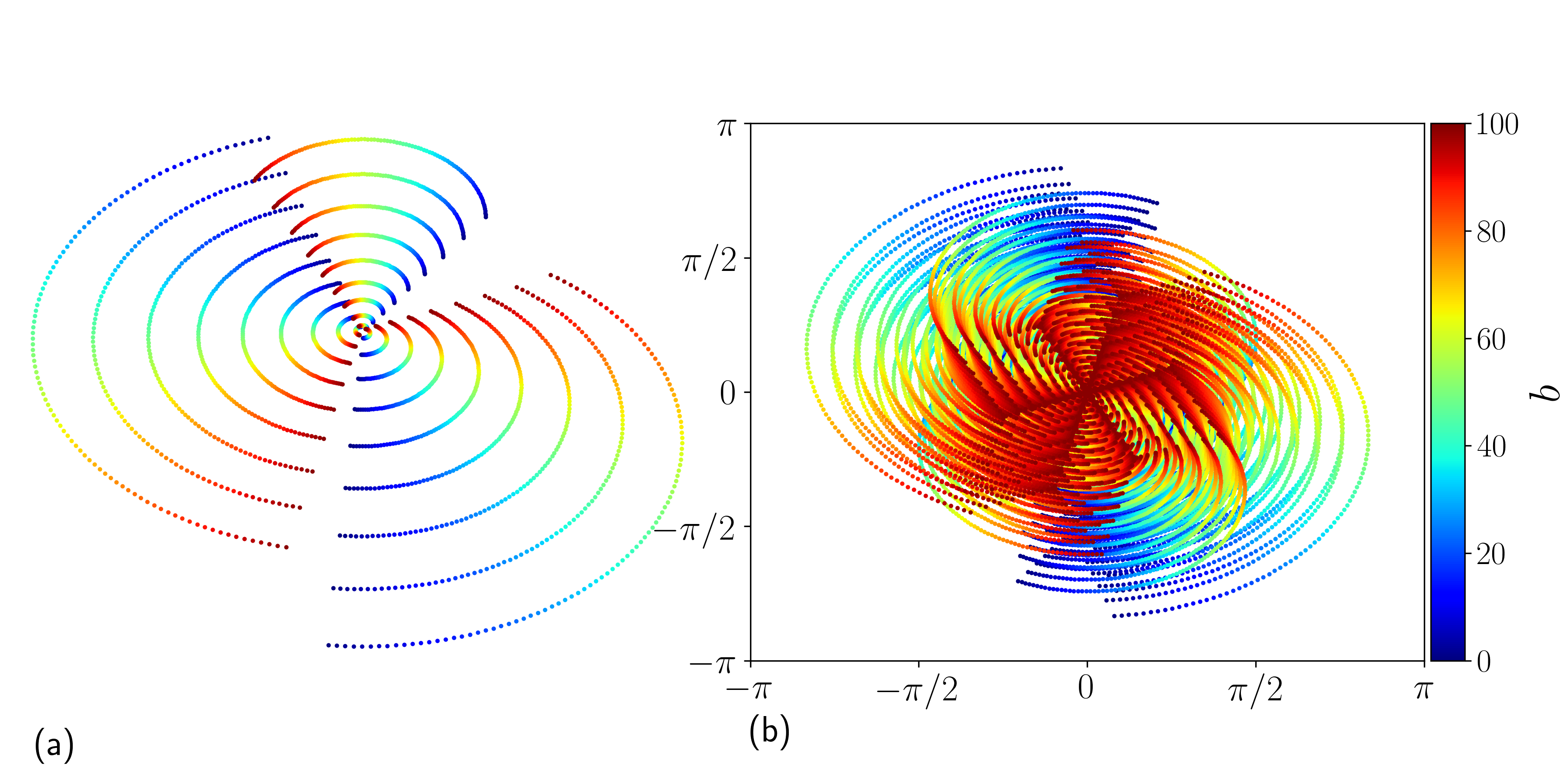}
\caption{(a) Arrangement $\{ \Omega_b \}_{b=1}^{100}$ of the antennas of the \emph{Very Large Array} (VLA)~\cite{thompson80vla} for a total observation time of five hours. (b) the visibility set $\{ \cl V_b \}_{b=1}^{100}$ corresponding to (a). The colors vary from blue for $b=1$ to red for $b=100$.} 
\label{fig:31_antennas_and_uv}
\end{figure}

In practice, the measurement vector $\bs x(t)$ must be time sampled as $\bs x[i]=\bs x(iT)$ with \emph{sampling period} $T$ and $i$ an integer to perform digital computations. Due to the rotation of the Earth, the time samples $\bs x[i]$ are separated into $B$ \emph{Short-Time Integration} (STI) intervals, or \emph{batches}, indexed by $b$ and of duration $IT$ with $I$ samples per STI. All these samples can be stacked into a \emph{signal set} $\cl X$ as
\begin{equation} \label{eq:xb}
    \ts\cl X := \bigcup_{b \in \upto{B}} \cl X_b,\quad \cl X_b :=\{\bs x_b[i],~i\in\upto{I}\},  
\end{equation}
where $\bs x_b[i] := \bs x[(b-1)I + i]$. 
With a typical sampling rate of $1$GHz, and an STI interval $IT$ of $15$s, $I \approx 15~10^9$ samples are enough to accurately approximate the covariance matrix with a \emph{sample covariance}. The batch duration is sufficiently short to assume that the set of antenna positions for batch $b$ remains nearly stationary within the time interval $\cl T_b := [bIT,(b+1)IT]$, \ie $\Omega_b := \{\bs p_q^\perp(t)\}_{q=1, t \in \cl T_b}^Q \approx \{ \bs p_{bq}^\perp \}_{q=1}^Q$. An example of realistic antenna arrangements $\{ \Omega_b \}\btoB$ and corresponding visibility sets $\{ \cl V_b \}\btoB$ for $\cl V_b := \Omega_b - \Omega_b$ is given in Fig.~\ref{fig:31_antennas_and_uv}. From the definition of $\bs x_b[i]$ in~\eqref{eq:xb}, the \emph{sample} covariance matrix of batch $b$ is defined as 
\begin{align} \label{eq:31_empi_covmat}
\begin{split}
    \sampcovmat_b(\cl X_b) &:= \tinv{I} \sum_{i=1}^I \bs x_b[i]\bs x_b^*[i] \\
    &\approx \E[s,\bs n]{\bs x_b[i] \bs x_b[i]^*} = \intMOmb[]{\vintmap} + \bs\Sigma_{\bs n},
\end{split}
\end{align}
for any $i\in\upto{I}$ and $\intMOmb[]{\vintmap} := \bc I_{\Omega(t=(b-1/2)MT)}[\vintmap]$. 

Anticipating over the \emph{boundedness} and \emph{bandlimitedness} assumptions \ref{h:bounded-FOV} and \ref{h:band-limitedness-fvign} given in Sec.~\ref{sec:recovery}, $\vintmap$ can be represented by a vectorized image $\intdisc \in \Rbb^N$ of $N$ pixels in the FOV. While the interferometric matrix at batch $b$, $\intMOmb[]{}$, can be modeled in matrix form as shown in Appendix~\ref{app:matrix}, the discrete representation of $\vintmap$ yields a vector formulation
\begin{equation} \label{eq:32_rb}
    \bar{\bs v}_b := \vect(\intMOmb[]{\intdisc}) = \bs G_b \bs F\intdisc
\end{equation}
where $\bs G_b \in \Cbb^{Q^2\times N}$ is a convolutional interpolation operator that interpolates the \emph{on-grid} frequencies obtained from the FFT $\bs F \intdisc $ in the continuous Fourier plane at frequencies corresponding to the difference set $\cl V_b := \cl V(t=(b-1/2)IT)$ defined in~\eqref{eq:31_baselines}. This procedure is known as \emph{Non-Uniform Fast Fourier Transform} (NUFFT). We practically use the \textsc{min-max} interpolation technique~\cite{fessler03}. Finally, as depicted in Fig.~\ref{fig:computations}(a), all the visibilities accumulated with the $B$ batches are combined as 
\begin{equation} \label{eq:32_r}
    \bar{\bs v} = \begin{bmatrix}
        \bar{\bs v}_1 \\ \vdots \\ \bar{\bs v}_B
    \end{bmatrix} = \begin{bmatrix}
        \bs G_1 \\ \vdots \\ \bs G_B
    \end{bmatrix} \bs F \intdisc  = \bs G \bs F \intdisc .
\end{equation}
Eq.~\eqref{eq:32_r} compares to the standard RI forward model~\cite[Eq.~(7)]{Pratley18} with less details about the NUFFT compensation terms such as zero-padding and gridding correction. The \emph{visibility operator} $\bs{GF}$ corresponds to the forward operator $\bs\Phi$ in the introduction. 

We emphasize the difference between the \emph{acquisition operator}
\begin{align*}
\label{eq:acqop}
    \cl X \mapsto\ & \acqop(\cl X):= (\acqop_1(\cl X_1)^\top,\ldots,\acqop_B(\cl X_B)^\top)^\top \in \Cbb^{Q^2B},\\
    &\text{with}\ \acqop_b(\cl X_b) := \vect(\bs C_b(\cl X_b)-\bs\Sigma_{\bs n})\in \Cbb^{Q^2},\nonumber
\end{align*} 
that computes the visibility vector $\bs v = \acqop(\cl X)$ from the antenna signals $\cl X$, and the \emph{imaging operator} 
\begin{equation} \label{eq:imop}
    \intdisc' \in \Rbb^N \mapsto \imop(\intdisc') := \bs G \bs F\intdisc'
\end{equation}
that computes a candidate visibility vector $\bar{\bs v}' = \imop(\intdisc')$ from a candidate image $\intdisc'$. These operators are identical up to a statistical noise, the origin of the approximation made in~\eqref{eq:31_empi_covmat}, which means that 
\begin{equation} \label{eq:oplink}
    \bb E_{\cl X} \acqop(\cl X) = \imop(\intdisc).
\end{equation}

Historically, the goal of radio-interferometric imaging has been to compute a good estimate $\wt{\intdisc}$ fitting the visibility data $\bs v$ yielded from $\acqop(\cl X(\intdisc))$. Among the latest state-of-the-art reconstruction algorithms, both SARA~\cite{carrillo12} and AIRI~\cite{terris22} algorithms aim to provide an estimate $\wt{\intdisc} \approx \intdisc$ by solving 
\begin{equation*}
    \wt{\intdisc} = \argmin_{\intdisc' \in \Rbb^N}~\tinv 2 \norm{\bs v-\bs G \bs F \intdisc'}{2}^2 + \lambda r(\intdisc')
\end{equation*}
where the term $r(\intdisc')$ is a specific \emph{regularization}. Unfortunately, the total number of visibilities\footnote{Half of the Hermitian matrix $\intMOmb[]{\intdisc}$ as well as its diagonal, containing the DC component, are usually removed from the measurements to avoid redundancy.} $V = Q(Q-1)B$ can become too large for arrays containing thousands of antennas and aggregating measurements over thousands of batches. The following sections describes ways to reduce this number of measurements.

%%%%%%%%%%%%%
\subsection{Post-Sensing Compression} \label{sec:aposteriori}

\begin{figure*}[t]
    \centering
    \includegraphics[width=0.9\linewidth]{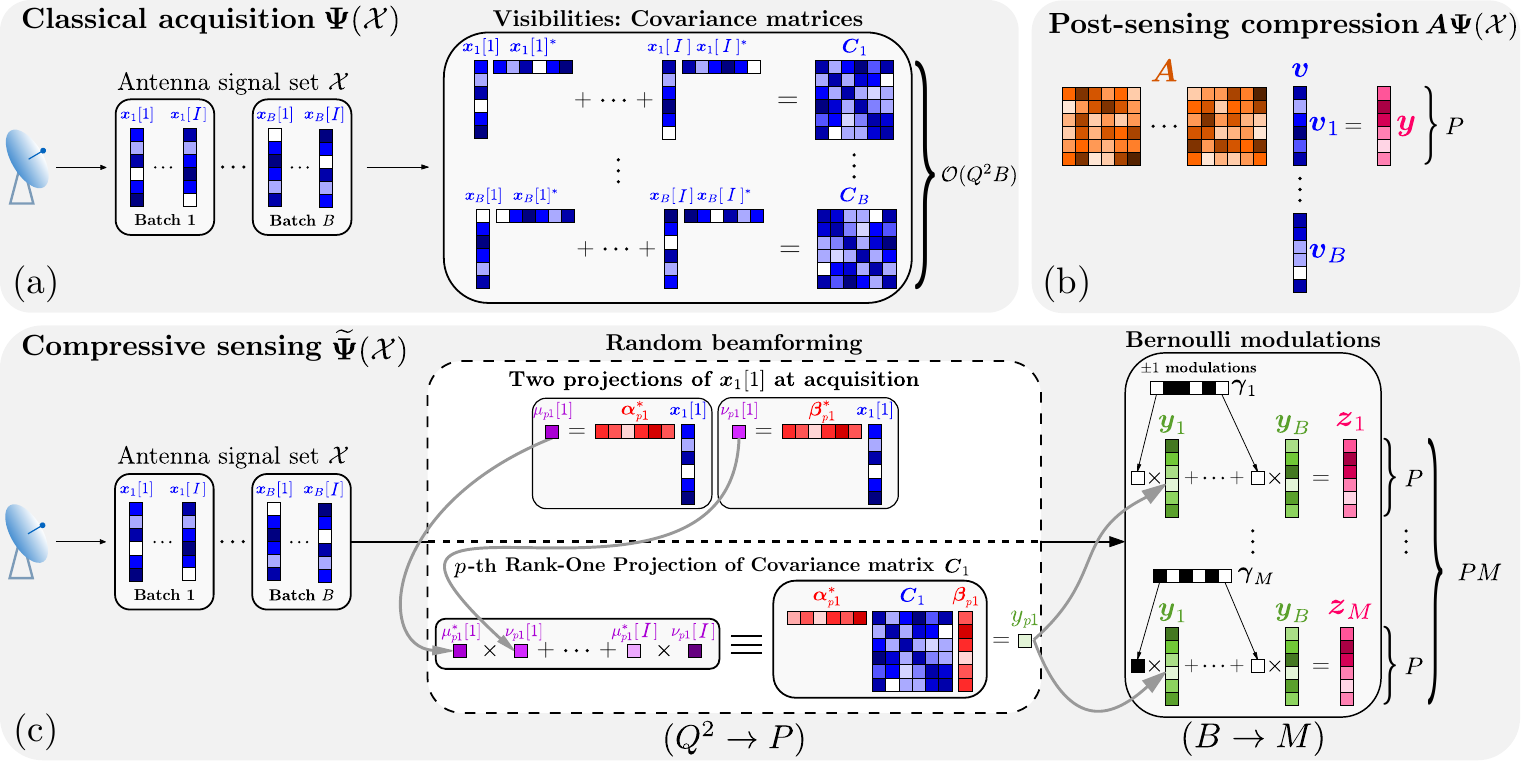}
    \caption{
    Computations at the acquisition: from the antenna signals $\cl X$ to the (compressed) visibilities. We consider the noiseless case ($\bs\Sigma_{\bs n}=\bs 0$) to simplify the illustration. (a) \textbf{Classical scheme}. For each batch $b$, the sample covariance matrix is computed as $\sampcovmat_b = \tinv I \sum_{i=1}^I \bs x_b[i] \bs x_b[i]^*$. The covariance matrices include all visibilities. (b) \textbf{Post-sensing compression}. The (vectorized) visibilities are compressed using an \iid random Gaussian matrix as $\bs y =\bs{Av}$. (c) \textbf{Compressive sensing}. For each batch $b$, $P$ ROPs of the covariance matrix are obtained as $y_{pb} = \tinv I \sum_{i=1}^I \bs\alpha_{pb}^* \bs x_b[i] \bs x_b[i]^* \bs\beta_{pb}$, $\forall p\in\upto{P}$. The $\{ \bs y_b \}\btoB$ ROP vectors are next \emph{modulated} with Bernoulli modulation vectors $\{ \bs\gamma_m \}_{m=1}^{M}$ and $\gamma_{mb} \simiid \cl U\{ \pm 1 \}$ as $\bs z_m := \sum\btoB \bs\gamma_m \bs y_b$. The $\pm 1$ values in the modulation vectors are represented by black and white cells.}
    \label{fig:computations}
\end{figure*}

\begin{figure}[t]
    \centering
    \includegraphics[width=0.95\linewidth]{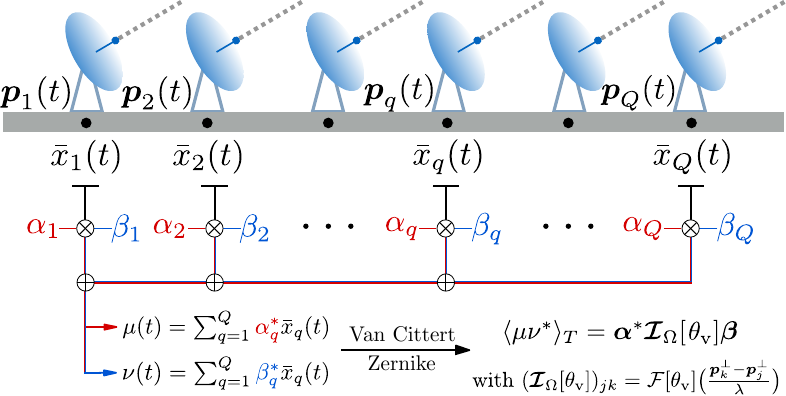}
    \caption{Projections of the (noiseless) measurement vector $\bar{\bs x}(t)$ with the sketching vectors $\bs\alpha$ and $\bs\beta$. By leveraging the Van Cittert-Zernike theorem, integrating the product of the projections $\mu(t)$ and $\nu(t)$ over time yields a ROP of the interferometric matrix of the vignetted image $\intMOm[]{\vintmap}$.}
    \label{fig:rafa}
\end{figure}

Several post-sensing compression techniques have been considered in the past. For instance, as depicted in Fig.~\ref{fig:computations}(b), one can reduce the dimension (and thus the memory footprint) of both operators $\acqop$ and $\imop$ with $P$ random Gaussian projections and compute, at the acquisition and at the reconstruction,
\begin{align} 
    \bs y &= \bs A \acqop(\cl X),~~\text{and} \label{eq:32_gaussian_acq} \\
    \bar{\bs y} &= \bs A \imop(\intdisc) = \bs{AGF} \intdisc, \label{eq:32_gaussian_im}
\end{align}
respectively, with $\bs y, \bar{\bs y} \in \Cbb^{P}$ and $\bs A \in \Cbb^{P \times Q^2B}$ with $A_{jk} \underset{\iid}{\sim} \cl{CN}(0,1/P)$~\cite{wiaux09}. 
It appears clearly in~\eqref{eq:32_gaussian_acq} that the compression is applied after the acquisition of the visibilities, justifying the ``post-sensing'' terminology.
Unfortunately, $\bs A$ is dense and makes its application untractable with $\cl O(P Q^2B)$ operations. 

Another possibility named \emph{Baseline-dependent averaging}~\cite{wijnholds18} consists in averaging the visibilities associated to low-frequency content over consecutive batches. This yields a reduced number of measurements, scaling as $\cl O(Q^2B')$, with an equivalent number of batches $B' \ll B$. Applying the \emph{averaging operator} $\bs S \in \{ 0,1 \}^{Q^2B'\times Q^2B}$ provides
\begin{align}
    \bs y &= \bs S \acqop(\cl X),~~\text{and} \label{eq:baseline_dep_av} \\
    \bar{\bs y} &= \bs S \imop(\intdisc) = \bs G' \bs F \intdisc  \label{eq:baseline_dep_av_im}
\end{align}
where $\bs G' \in \Cbb^{Q^2B'\times N}$ is the \emph{averaged} version of $\bs G$. Baseline-dependent averaging is cheap and can provide more than $80\%$ of compression~\cite{wijnholds18}. However, the resulting projection does not leverage the low-complexity representation of the image $\intdisc$.

%%%%%%%%%%%%%%%%%%%%%%%%%%%%%%%%%
\subsection{Compressive Sensing Scheme:\newline\phantom{C.\ } Random Beamforming and Random Modulations}

Here, we develop a two-layer compressive sensing model relying on \emph{(i)} random beamforming for compressing the information stored in the interferometric matrix of each batch, and \emph{(ii)} random modulations followed by time integration to further compress the data stream along the time domain. The so-called MROPs are presented as a good candidate for reduced memory storage and cheap acquisition. The following mathematical derivations focus on the noiseless measurements, and the impact of noise is discussed at the end of this section.

\paragraph{First layer: random beamforming}
In RI, \emph{beamforming} is a signal processing technique that has been used to enhance the sensitivity and resolution of radio telescopes by combining signals from multiple antennas~\cite{VanderVeen18}. Mathematically, given a sketching vector $\bs\alpha$ and the signals defined in~\eqref{eq:31_meas}, beamforming can be modeled as a projection (or \emph{sketch}) of the measurements vector as 
\begin{align} \label{eq:beamforming}
\begin{split}
    \textstyle \mu(t) := \bs\alpha^*\bs x(t) &\textstyle = \int_{\Rbb^2} g(\bs l) s(\bs l,t) \sum_{q=1}^Q \alpha_q^* e^{\frac{\im2\pi}{\lambda} \bs p_q^\perp(t)^\top \bs l}\, \ud\bs l \\
    &\textstyle = \int_{\Rbb^2} g_{\bs \alpha}(\bs l) s(\bs l,t)\, \ud\bs l
\end{split}
\end{align}
for an equivalent direction-dependent gain $g_{\bs \alpha}(\bs l):=g(\bs l) \sum_{q=1}^Q \alpha_q^* e^{\frac{\im2\pi}{\lambda} \bs p_q^\perp(t)^\top \bs l}$.
By adjusting the phase and amplitude of the signals from the $Q$ antennas using the sketching vector $\bs\alpha$, beamforming allows the array to focus on a specific direction in the sky by narrowing $g_{\bs \alpha}(\bs l)$ compared to $g(\bs l)$. 

As depicted in Fig.~\ref{fig:rafa}, let us consider two sketches $\mu(t)$ and $\nu(t)$ of the measurement vector computed from~\eqref{eq:beamforming} with the sketching vectors $\bs\alpha:=(\alpha_1,\ldots,\alpha_Q)$ and $\bs\beta:=(\beta_1,\ldots,\beta_Q)$, and more specifically their sampling $\mu_b[i]$ and $\nu_b[i]$ obtained in the same way as how $x_b[i]$ was obtained in~\eqref{eq:xb}. Aggregating their product in time gives
\begin{align} \label{eq:31_sample_rop}
\begin{split}
    \ts y_b &\ts = \tinv I \sum_{i=1}^{I} \mu_b[i] \nu_b^*[i] \\
    \ts &\ts = \tinv I \sum_{i=1}^{I} \bs\alpha^* \bs x_b[i] \bs x_b^*[i] \bs\beta = \bs\alpha^* \sampcovmat_b \bs\beta.
\end{split}
\end{align}
In~\eqref{eq:31_sample_rop}, $\bs\alpha^* \sampcovmat_b \bs\beta := \langle \bs\alpha\bs\beta^*, \sampcovmat_b \rangle_{\rm F}$ is called a \emph{rank-one projection} (ROP) of the sample covariance matrix because it amounts to projecting $\bs C_b$ with the rank-one matrix $\bs\alpha\bs\beta^*$~\cite{cai15,chen2015exact}. We refer to Sec.~\ref{sec:related} for appropriate references.
Eq.~\eqref{eq:31_sample_rop} showcases that the acquisition process has changed compared to Sec.~\ref{sec:model}.A-B. These random beamforming computations are illustrated in the center box in Fig.~\ref{fig:computations}(c). We will define a new acquisition operator after introducing \emph{layer 2}.

Inserting the definition of the sample covariance matrix~\eqref{eq:31_empi_covmat} into~\eqref{eq:31_sample_rop} yields 
\begin{equation} \label{eq:31_three_rop}
    y_b\ =\ \bs\alpha^* \sampcovmat_b \bs\beta\ \approx\ \bar{y}_b = \bs\alpha^* \intMOmb[]{\vintmap} \bs\beta + \bs\alpha^* \bs\Sigma_{\bs n} \bs\beta 
\end{equation}
where $\bar y_b$ is composed of two terms. First, the signal of interest, the ROP $\bs\alpha^* \intMOmb[]{} \bs\beta$ of the interferometric matrix (as detailed in App.~\ref{app:MCFLI}, this is analogous to~\cite[Eq.~(3)]{leblanc24} in MCFLI). Second, the quantity $\bs\alpha^* \bs\Sigma_{\bs n} \bs\beta$ is a \emph{fixed} bias term due to noise in the antenna measurements. This bias is expected to be compensated, at least partially, by the knowledge of the sketching vectors $\{\bs\alpha, \bs\beta\}$ and a good estimate of the noise covariance matrix $\bs\Sigma_{\bs n}$. We detail the analogy with MCFLI in App.~\ref{app:MCFLI}. 

We are going to show that one can devise a specific sensing scheme of the antenna signals that avoids storing individually the $B$ sample covariance matrices $\{ \sampcovmat_b \}\btoB$ as done classically in~\eqref{eq:31_three_rop}. This is possible while still ensuring an accurate estimation of the image of interest. This sensing first records $P \ll Q^2$ different random ROPs of each $\sampcovmat_b$, or, equivalently, $P$ evaluations of $y_b$ in~\eqref{eq:31_sample_rop} from different $\mu_b[i]$ and $\nu_b[i]$, associated with random vectors $\bs \alpha$ and $\bs\beta$. Following~\cite{leblanc24}, we consider $P$ \iid random \emph{sketching} vectors $\bs \alpha_{pb}, \bs \beta_{pb} \sim_{\iid} \bs \rho$, $p \in \upto{P}$, for some random vector $\bs\rho \in \bb C^Q$ with $\rho_q \simiid \exp(\im \cl U[0,2\pi)),~q\in\upto{Q}$. While random Gaussian vectors $\bs\alpha$ and $\bs\beta$ were another appropriate choice (the only condition is sub-Gaussianity), unitary vectors are more practical for implementation on real antennas, especially in analog operations, as they only require tuning the phase by beamforming techniques.

We focus here on the forward imaging model. The $P$ ROPs are gathered in the measurement vector $\bar{\bs y}_b := (\bar y_{pb})_{p=1}^{P} \in \bb C^{P}$, with $\bar y_{pb} = \bs\alpha_{pb}^* \intMOmb[]{\intdisc} \bs\beta_{pb}$. Moreover, following the methodology of Sec.~\ref{sec:meas}, we can write 
\begin{equation} \label{eq:32_yb}
    \bar{\bs y}_b = \bs R_b \bs G_b \bs F \intdisc ,
\end{equation}
where each row of the matrix $\bs R_b$ computes a single ROP measurement, \ie $(\bs R_b^\top)_p = \vect(\bs\alpha_{pb} \bs\beta_{pb}^*)^\top$, for all $p \in \upto{P}$. The $B$ ROPed batches can then be concatenated as 
\begin{equation} \label{eq:crop}
    \bar{\bs y} = \begin{bmatrix}
        \bar{\bs y}_1 \\ \vdots \\ \bar{\bs y}_B
    \end{bmatrix} = \begin{bmatrix}
        \bs R_1 & & \\ 
        & \ddots & \\ 
        & & \bs R_B
    \end{bmatrix} \begin{bmatrix}
        \bs G_1 \\ \vdots \\ \bs G_B
    \end{bmatrix} \bs F \intdisc  = \bs D\bs G \bs F \intdisc ,
\end{equation}
with $\bar{\bs y} \in \Cbb^{PB}$. The imaging model~\eqref{eq:crop}, simply consists of adding the \emph{concatenated ROP} operator $\bs D$ to the conventional model~\eqref{eq:32_r} sensing the visibilities. 

\paragraph{Layer 2: Random modulations}
Unfortunately, in~\eqref{eq:crop}, the number of measurements $PB$ still depends on the (fixed) number of batches $B$. It is also unclear if the composed operator $\bs{DGF}$ allows us to easily benefit from the union of visibilities at all batches, \ie a denser sampling of the image spectral domain.

Following an approach whose feasibility will be discussed momentarily, we thus apply $M$ random modulations of the ROP vectors before integrating them, \ie we compute 
\begin{equation} \label{eq:32_modul1}
    \ts \bar{\bs z}_m = \sum\btoB \gamma_{mb} \bar{\bs y}_b
\end{equation} 
with the \emph{modulation vectors} $\bs\gamma_m \simiid \bs\gamma$, $p \in \upto{M}$, and $\gamma_b \simiid \cl U\{\pm 1\}$, $b \in \upto{B}$. 

From the concatenated ROP model~\eqref{eq:crop}, this can also be viewed as 
\begin{equation*}
    \bar{\bs z}_m = \big[\gamma_{m1} \Id, \ldots, \gamma_{mB} \Id
    \big]\, \bar{\bs y} = (\bs\gamma_m^\top \otimes \Id) \bar{\bs y}
\end{equation*}
with $\Id \in \Rbb^{P\times P}$ and the Kronecker product $\otimes$. Moreover, the $M$ modulations can be concatenated as 
\begin{equation*} 
    \bar{\bs z} = \big[\bar{\bs z}_1^\top, \ldots, \bar{\bs z}_{M}^\top\big]^\top = (\bs\Gamma^\top \otimes \Id) \bar{\bs y} = \bs M \bar{\bs y},
\end{equation*}
with $\bs\Gamma := [\bs\gamma_1, \ldots, \bs\gamma_{M}] \in \{ \pm 1 \}^{B\times M}$, $\bs M := \bs\Gamma^\top \otimes \Id \in \{ \pm 1 \}^{P M \times P B}$, and $\bar{\bs z} \in \Cbb^{P M}$. The resulting imaging model writes 
\begin{equation} \label{eq:32_modul}
    \bar{\bs z} = \imopn(\intdisc) := \bs M\bs D\bs G \bs F \intdisc   
\end{equation}
which turns in to simply apply the \emph{modulation operator} $\bs M \in \{\pm1,0\}^{P M \times P B}$ to the concatenated ROP model in~\eqref{eq:crop}. 

On the acquisition side, this defines the \emph{compressive sensing operator} 
$$
 \ts\cl X \in \Cbb^Q \mapsto \bs z = ({\bs z_1}{}^\top, \ldots,  {\bs z_{M}}{}^\top)^\top := \acqopn(\cl X) \in \bb C^{P M},
$$
with 
$$
\ts z_{mp}  := \sum\btoB \gamma_{mb} \bs\alpha_{pb}^* (\sampcovmat_b(\cl X_b)-\bs\Sigma_{\bs n}) \bs\beta_{pb}^*,\ p \in \upto{P},\ m \in \upto{M}.
$$
As illustrated in Fig.~\ref{fig:computations}(c), this also means that, in the noiseless case ($\bs\Sigma_{\bs n}=\bs 0$) or up to a debiasing step removing the contribution of the noise covariance,  
$$
\ts z_{mp} := \sum\btoB \gamma_{mb} y_{pb},\ \text{with}\ y_{pb} = \tinv I \sum_{i=1}^I \mu_{pb}[i] \nu_{pb}^*[i]
$$
which involves the computation of antenna signal sketches $\mu_{pb}[i] := \bs\alpha_{pb}^* \bs x_b[i]$ and $\nu_{pb}[i] := \bs\beta_{pb}^* \bs x_b[i]$.

\begin{table}[!t] 
    \centering
    \caption{Computational cost of the acquisition and sample complexities.}
    \resizebox{\linewidth}{!}{
    \begin{tabular}{|c|c|c|c|}
        \hline
        Name & Model & Cost per batch & Max size \\
        \hline
        Classical acquisition & $\acqop$ & $\cl O(Q^2)$ & $\cl O(Q^2B)$ \\
        Post-sensing compression & $\bs A \acqop$ & $\cl O(P Q^2)$ & $\cl O(Q^2)$ \\
        Compressive sensing & $\acqopn$ & $\cl O(P Q)$ & $\cl O(P M)$ \\
        \hline
    \end{tabular}
    }
    \label{tab:complexity}
\end{table}

Table~\ref{tab:complexity} compares the computational cost and maximal data size of the classical, post-sensing compression and 
compressive sensing schemes. It is seen that the cost of the compressive sensing approach becomes smaller than the classical sensing scheme if $P < Q$. The key observation in Fig.~\ref{fig:computations}(c) is that the number of measurements never exceeds $P M$ during the acquisition because the ROP and modulations can be computed and aggregated on the fly. 

As shown in the numerical experiments in Sec.~\ref{sec:numerical}, $P M$ can be much smaller than the $V \simeq Q^2B$ visibilities of the classical scheme while still ensuring accurate image estimations. The random Gaussian post-sensing compression must compute an intermediate number $Q^2$ of visibilities during each batch. More particularly, computing $P$ projections of each visibility matrix costs $\cl O(P Q^2)$ operations per batch. 

In the compressive sensing scheme, we do not account for the number $\cl O(P B)$ of sketching elements needed for the random beamforming because these values can be selected and stored with less precision than the final data. \\

For the computational complexity of the forward imaging model, critical in all image reconstruction algorithms, the discussion about the structural differences between the operator $\bs{MD}$ composing the ROP and random modulations, and the operators $\bs A$ for the Gaussian compression and $\bs G'$ for the baseline-dependent averaging, is deferred to a parallel submission.

Finally, the MROPs find a natural interpretation of ROP applied to a block-diagonal interferometric matrix $\bc I$ gathering the $B$ intermediate interferometric matrices $\{ \intMOmb[]{} \}\btoB$. This interpretation reported to Appendix~\ref{app:rop} is at the core of the guarantees provided in Sec.~\ref{sec:recovery}.

\begin{remark}[Other noise sources] \label{rem:noise}
There are obviously many other noise sources than the bias related to the thermal noise at the antennas and written in~\eqref{eq:31_three_rop} such as quantum, quantization, and correlator noise~\cite[Chap.~1.3]{Thompson17} to name but a few. After compensation of the bias and assuming gaussian noise at the uncompressed visibilities~\eqref{eq:32_r}, it is possible to show that the noise model for the MROPs can be approximated as a centered Gaussian noise with a block-diagonal covariance matrix. Furthermore, if the covariance matrix of the noise at the visibilities is diagonal, so is the case for the noise impacting the MROPs.
\end{remark}

%%%%%%%%%%%%%%%%%%%%%%%%%%%%%%%%%%%%%%%%%
\section{Image Recovery Guarantees}
\label{sec:recovery}

Let us study the compressive imaging problem of directly estimating sparse images from the MROP imaging model~\eqref{eq:32_modul} given in Sec.~\ref{sec:model}. The acquisition process is not included in this section.

We simplify the analysis to the case where no modulation is applied before time aggregation, \ie $\gamma_{mb} = 1$ for all $b$ and a single $m$ in~\eqref{eq:32_modul1}, that is we follow the IROP model described by $\bar{\bs z} = \sum\btoB \bar{\bs y}_b \in \Cbb^P$. 
This simplification yields independent measurements, unlike MROPs, which is a condition necessary for our proofs.
In this case, the number of observations is only driven by $P$. 

From six simplifying assumptions made on both $\vintmap$ and the sensing scenario (see Sec.~\ref{sec:work-hyp}), we theoretically demonstrate that this method achieves reduced sample complexities compared to the uncompressed scheme of~\eqref{eq:32_r}. 

The IROP imaging model is equivalent to replacing $\bs\Gamma$ by $\bs 1_B$ so that $\bs M$ is replaced by $\bs 1_B^\top \otimes \Id = \big[\Id, \ldots, \Id\big]$ in~\eqref{eq:32_modul}, hence writing
\begin{equation}\label{eq:32_irop}
    \bar{\bs z} =  \big[\Id, \ldots, \Id\big]\,\bs D\bs G \bs F \intdisc  = \bs R \bs G \bs F \intdisc ,
\end{equation}
with $\bs R := \big[\bs R_1, \ldots, \bs R_B\big] \in \Cbb^{P \times Q^2B}$. As clear from~\eqref{eq:32_irop}, the ROP operator $\bs R$ is dense, and the goal of this section is to prove that the total forward operator $\bs{RGF}$ satisfies a variant of the usual \emph{restricted isometry property} (RIP), the $\RIPto$~\cite{chen2015exact}. This RIP, which relates the $\ell_2$-norm of the embedded vectors to the $\ell_1$-norm of their image by the forward operator, ensures the recovery of sparse images from their compressive observations in~\eqref{eq:32_irop}.
 
%%%%%%%%%%%%%%%%%%
\subsection{Working Assumptions} \label{sec:work-hyp}

The theoretical guarantees for the IROP model rely on six realistic assumptions stated below.
%%%%%%%%
We first assume a bounded field of view (FOV).
\begin{assumption}[Bounded FOV] \label{h:bounded-FOV} 
Given $L>0$, the support of the vignetting window $g^2(\bs l)$ is contained in a domain $\cl D := [-L/2, L/2] \times [-L/2, L/2]$, \ie $g=0$ on the frontier of $D$. 
\end{assumption}
\noindent Therefore, supposing $\intmap$ bounded, we have $\supp \vintmap \subset \cl D$ and $\vintmap = 0$ over the frontier of $D$. 
%%%%%%%%
We also need to discretize $\vintmap$ by assuming it is bandlimited.
\begin{assumption}[Bounded and bandlimited image]
\label{h:band-limitedness-fvign} 
The sky image $\intmap$ is bounded, and $\vintmap$ is bandlimited with bandlimit $\frac{W}{2}$, with $W := \frac{N_1}{L}$ and $N_1 \in \bb N$, \ie $\cl F[\vintmap](\bs \chi) = 0$ for all $\bs \chi$ with $\|\bs \chi\|_\infty \geq \frac{W}{2}$.
\end{assumption}
As will be clear below, this assumption enables the computation of the total interferometric matrix $\intMOm[]{\vintmap}$ from the discrete Fourier transform of the following discretization of $\vintmap$. From \ref{h:bounded-FOV} and \ref{h:band-limitedness-fvign} the function $\vintmap$ can be identified with a vector $\intdisc \in \bb R^N$ of $N=N_1^2$ components. Up to a pixel rearrangement, each component $\intdiscel_j$ of $\intdisc$ is related to one specific pixel of $\vintmap$ taken in the $N$-point grid
\begin{equation*}
\ts \cl G_N := \frac{L}{N_1} \{(s_1,s_2)\}_{s_1,s_2=-\frac{N_1}{2}}^{\frac{N_1}{2}-1} \subset \cl D.
\end{equation*}
The Discrete Fourier Transform (DFT) $\hat{\intdisc}$ of $\intdisc$ is then computed from the 2-D DFT matrix $\bs F \in \bb C^{N \times N}$, \ie $\hat{\intdisc} = \bs F \intdisc \in \bb C^N,\quad\bs F := \bs F_1 \otimes \bs F_1$,
with $(\bs F_1)_{kl} =  e^{-\frac{\im2\pi}{N_1} kl}/\sqrt{N_1}$, $k,l \in \upto{N_1}$, and the Kronecker product $\otimes$. Each component of $\hat{\intdisc}$ is related to a 2-D frequency of
\begin{equation*}
\ts \hat{\cl G}_N := \frac{W}{N_1} \{\chi_1, \chi_2\}_{\chi_1,\chi_2=-\frac{N_1}{2}}^{\frac{N_1}{2}-1} \subset [-\frac{W}{2},\frac{W}{2}] \times [-\frac{W}{2},\frac{W}{2}].
\end{equation*}
 
Assumptions \ref{h:bounded-FOV} and \ref{h:band-limitedness-fvign} also allow us to compute the \emph{total interferometric matrix} $\intMOm[]{\vintmap} \in \cl H^{QB}$ defined in Appendix~\ref{app:rop} whose $b$-th diagonal block can be computed from $\intdisc$ as 
\begin{equation} \label{eq:equiv-cont-discrt-interfero}
  \intMOmb[]{\vintmap} = \bs W_b \bs F \bs D_{\intdisc} \bs F^* \bs W_b^*,
\end{equation}
as shown in App.~\ref{app:matrix}. This matrix $\intMOmb[]{\vintmap}$ will be useful to interpret the IROP measurements as ROP applied to it.

%%%%%%%
We need now to simplify our selection of the visibilities.  
\begin{assumption}[Distinct nonzero visibilities]
\label{h:distinct-visib}
Defining the total visibility set $\cl V_0 = \bigcup\btoB \cl V_b \setminus \{\bs 0\}$ for the visibilities $\cl V_b := \{ \frac{\bs p_{bj}^\perp - \bs p_{bk}^\perp}{\lambda} \}_{j,k=1}^Q$ defined in Sec.~\ref{sec:model}, we assume that all nonzero visibilities in $\cl V_0$ are unique, which means that $V = |\cl V_0| = \sum\btoB |\cl V_b| = Q(Q-1)B$. 
\end{assumption} 

If $\vintmap$ has zero mean, $\hat\intdiscel_0 = (\bs F \intdisc )_0 = 0$, the diagonal entries of $\intMOm[]{}$ are zero and
\begin{equation} \label{eq:equiv-frob-l2}
\ts \norm{\intMOm{\vintmap}}{\rm F}^2 = \norm{ \bs G_0 \bs F \intdisc }{}^2,
\end{equation}
where $\bs G_0$ interpolates on-grid frequencies to the continuous frequencies defined in the set $\cl V_0$, related to the off-diagonal entries of $\intMOm{\vintmap}$ and whose rows are all different (from~\ref{h:distinct-visib}).

%%%%%%%
We need to \emph{regularize} the (ill-posed) imaging problem by supposing that $\intdisc$ is \emph{sparse}. For the same reasons as given in~\cite[Sec.~VI]{leblanc24}, the sparsity of $\intdisc$ is restricted to the canonical basis. 
\begin{assumption}[Sparse image]
\label{h:sparsifying-basis}
The discrete image $\intdisc$ is $K$-sparse (in the canonical basis):
$\intdisc \in \Sigma_K := \{\bs v: \|\bs v\|_0 \leq K\}$.
\end{assumption}

%%%%%%%%%%%%%%%%%%%%%%%%%%%%%%%%%%%%%%%%%%%%%%
Our theoretical analysis leverages the tools of compressive sensing theory~\cite{candes2006a,foucart17}. In particular, as stated in the next assumption, we require that the total interferometric matrix---actually, its non-diagonal entries encoded in $\cl V_0$---captures enough information about any sparse image $\intdisc$. 
\begin{assumption}[$\RIP$ for visibility sampling]
\label{h:rip-visibility}
Given a sparsity level $K$, a distortion $\delta > 0$, and provided that the total number of visibilities $V$ satisfies
\begin{equation} \label{eq:34_sample_comp}
V \geq \delta^{-2} K\, {\rm plog}(N,K, \delta),
\end{equation}
for some polynomials ${\rm plog}(N,K,1/\delta)$ of $\log N$, $\log K$ and $\log 1/\delta$, the matrix $\bs G_0 \bs F \intdisc $ defined\footnote{the index $0$ indicates that the DC component is excluded from the Fourier sampling. Only the frequencies at $\cl V_0$ are computed.} in~\eqref{eq:32_r} respects the $\RIP(\Sigma_K,\delta)$, \ie
\begin{equation*} \label{eq:34_rip-def}
\ts (1-\delta) \norm{\bs v}{}^2 \leq \frac{N}{\varpi^2 V} \norm{\bs G_0 \bs{Fv}}{2}^2 \leq (1+\delta) \norm{\bs v}{}^2,\ \forall \bs v \in \Sigma_K
\end{equation*}
\end{assumption}
\noindent where $\varpi := \frac{L^2}{\sqrt N}$ can be found from the continuous interpolation formula of the Shannon-Nyquist sampling theorem.
As will be clear later, combined with~\eqref{eq:equiv-frob-l2}, this assumption ensures that two different sparse images lead to two distinct interferometric matrices, a key element for stably estimating images from our sensing model (see Prop.~\ref{prop:L2L1}). 

We specify now the distribution of the sketching vectors $\bs\alpha$, $\bs\beta$.
\begin{assumption}[Random sub-Gaussian sketches] \label{h:sketch-distrib}
The sketching vectors $\{\bs\alpha_p,\bs\beta_p\}_{p=1}^{P}$ involved in~\eqref{eq:32_irop} have \iid sub-Gaussian components. 
\end{assumption}
Assumption~\ref{h:sketch-distrib} is required to use Lemma~\ref{lem:27_concentration_ROP_A} in order to prove Prop.~\ref{prop:24_rip_rop}. We refer to~\cite[Sec. IV.B]{leblanc24} for the rationale and limitations of our assumptions.

%%%%%%%%%%%%%%%%%%%%%%
\subsection{Data Centering}
\label{sec:centering}

The IROP model provides noisy measurements $\bs z$ written as 
\begin{equation} \label{eq:34_irop}
    \ts \bs z = \bs R\bs G \bs F \intdisc  + \bs\xi
\end{equation} 
for an additive noise $\bs\xi$ and after a debiasing step removing the measurement noise bias $\bs R \vect(\bs \Sigma_{\bs n})$ (see~\eqref{eq:31_three_rop}). Because of the multiplicity $Q$ of the \emph{mean} (\aka DC component) in the sensing operator $\bs{GF}$, only its \emph{centered version} $\bs G_0 \bs F$---removing the $Q$ DC samples---can satisfy a $\RIP$, as stated in Assumption~\ref{h:rip-visibility}. The DC component $\hat\intdiscel_0=(\bs F \intdisc )_0$ can be easily estimated from the autocorrelation of the measurements at a single antenna. From this, the contribution of the DC component at each batch $b$ can be subtracted from the measurements as
\begin{equation} 
    \bs z_b^\rmc := \bs z_b - \hat\intdiscel_0~ \bs R_b \bs G_b \bs e_1.
\end{equation}
With this DC compensation, a \emph{centered} version of the IROP model~\eqref{eq:34_irop} is considered as 
\begin{equation} \label{eq:34_irop_c}
    \bs z^{\rm c} = \bs{RG}_0 \bs F \intdisc + \bs\xi^\rmc
\end{equation} 
The centered model~\eqref{eq:34_irop_c} thus senses, through $\bs G_0 \bs F \intdisc $, the off-diagonal elements of the total interferometric matrix $\intMOm[]{\vintmap}$. We will show below that the combination of $\bs R$ with the interferometric sensing respects a variant of the RIP property, thus enabling image reconstruction guarantees.

%%%%%%%%%%%%%%%%%%%%%%
\subsection{Reconstruction analysis} \label{sec:rec-anl}

It is possible to compute an estimate $\wt{\intdisc}$ of $\intdisc$ from its measurements~\eqref{eq:34_irop_c} (whose sensing is valid under assumptions \ref{h:bounded-FOV}-\ref{h:band-limitedness-fvign}). This can be done by solving a variant of the basis pursuit denoise program~\cite{foucart17}, or BPDN$_{\ell_1}$, where the measurement fidelity constraint is defined with an $\ell_1$-norm, \ie
\begin{equation} \label{eq:BPDN}
  \ts \wt{\intdisc} = \argmin_{\bs v \in \Rbb^N} \norm{\bs v}{1} \st 
  \norm{ \bs z^\rmc - \CRIop \bs v}{1} \leq \epsilon, \tag{BPDN$_{\ell_1}$}
\end{equation}
where one must impose $\epsilon \geq \norm{\bs\xi^{\rm c}}{1}$ to reach feasibility.

This specific constraint in \ref{eq:BPDN} is adjusted to the properties of the ROP operator $\bs R$. We indeed show below that $\CRIop$ respects an adapted RIP, the $\RIPto(\Sigma_K,{\sf m}_K,{\sf M}_K)$: given a sparsity level $K$, and two constants $0<{\sf m}_K<{\sf M}_K$, this property imposes
\begin{equation} \label{eq:RIP-L2L1-def}
\ts {\sf m}_K \norm{\bs v}{2} \leq \tinv{P} \norm{\CRIop \bs v}{1} \leq {\sf M}_K \norm{\bs v}{2}, \quad \forall \bs v \in \Sigma_K.
\end{equation}
In a nutshell, if the image $\intdisc$ is $K$-sparse and the IROP forward operator $\CRIop$ satisfies the $\RIPto$ over $\Sigma_{2K}$, then, given any $K$-sparse estimate $\widetilde{\intdisc}$ of $\intdisc$, by setting $\bs v$ above to the $2K$-sparse residue $\intdisc - \widetilde{\intdisc}$, one observes that imposing a small $\ell_1$-distance between the image and its estimate \emph{in the measurement space} of $\CRIop$ amounts to restricting their actual $\ell_2$-distance.

The following proposition (inspired by~\cite[Lemma 2]{chen2015exact}) formalizes this connection and shows that the error $\norm{\intdisc -\wt{\intdisc}}{2}$ is bounded. 
\begin{proposition}[$\ell_2/\ell_1$ instance optimality of \ref{eq:BPDN}]
\label{prop:L2L1}
Given $K$, if there exists an integer $K' > 2K$ such that, for $k \in \{K', K + K'\}$, the operator $\CRIop$ satisfies the $\RIPto(\Sigma_k, {\sf m}_{k}, {\sf M}_{k})$ for constants $0 < {\sf m}_{k} < {\sf M}_{k} < \infty$, and
\begin{equation} \label{eq:RIP-L2L1-bound-condition-BPDN} 
    \tinv{\sqrt 2} {\sf m}_{K+K'} - {\sf M}_{K'} \sqrt{\tfrac{K}{K'}} \geq \gamma >0,
\end{equation}
then, for all $\intdisc$ sensed through $\bs z^\rmc = \CRIop \intdisc + \bs\xi^{\rm c}$ with bounded noise $\norm{\bs\xi^{\rm c}}{1} \leq \epsilon$, the estimate $\wt{\intdisc}$ provided by \ref{eq:BPDN} satisfies
\begin{equation} \label{eq:bpdn-inst-opt}
    \norm{ \intdisc - \wt{\intdisc}}{2} \leq C_0 \tfrac{\|\intdisc -\intdisc_K\|_1}{\sqrt{K'}} + D_0 \tfrac{\epsilon}{P},
\end{equation}
for two values $C_0=\cl O({\sf M}_{K'}/\gamma)$ and $D_0=\cl O(1/\gamma)$.
\end{proposition}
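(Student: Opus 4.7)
The plan is to follow the standard skeleton of RIP-based instance-optimality proofs, adapted to the $\ell_2/\ell_1$ norm pair. Introduce the error vector $\bs h := \wt{\intdisc} - \intdisc$; the task is to bound $\norm{\bs h}{2}$. Both $\intdisc$ and $\wt{\intdisc}$ are feasible for \ref{eq:BPDN} (the former because $\epsilon \geq \norm{\bs\xi^\rmc}{1}$), so by the triangle inequality $\norm{\CRIop \bs h}{1} \leq 2\epsilon$. On the other hand, the optimality of $\wt{\intdisc}$ gives $\norm{\wt{\intdisc}}{1} \leq \norm{\intdisc}{1}$; splitting each $\ell_1$-norm over $T_0 := \supp(\intdisc_K)$ and $T_0^c$ and rearranging yields the classical cone bound $\norm{\bs h_{T_0^c}}{1} \leq \norm{\bs h_{T_0}}{1} + 2\norm{\intdisc - \intdisc_K}{1}$.

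Next I would partition $T_0^c$ into disjoint blocks $T_1, T_2, \ldots$ of size $K'$, with $T_j$ indexing the $j$-th largest coordinates of $\bs h_{T_0^c}$ in magnitude. Sorting monotonicity gives $\norm{\bs h_{T_j}}{2} \leq \norm{\bs h_{T_{j-1}}}{1}/\sqrt{K'}$ for $j \geq 2$; summing and invoking both the cone bound and $\norm{\bs h_{T_0}}{1} \leq \sqrt{K}\norm{\bs h_{T_0}}{2}$ produces the tail estimate
$$\sum_{j \geq 2} \norm{\bs h_{T_j}}{2} \leq \tfrac{\norm{\bs h_{T_0^c}}{1}}{\sqrt{K'}} \leq \sqrt{\tfrac{K}{K'}}\,\norm{\bs h_{T_0}}{2} + \tfrac{2\norm{\intdisc-\intdisc_K}{1}}{\sqrt{K'}}.$$

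The core step is a two-sided RIP squeeze applied to the $(K+K')$-sparse vector $\bs h_{T_{01}} := \bs h_{T_0 \cup T_1}$. The lower $\RIPto(\Sigma_{K+K'}, {\sf m}_{K+K'})$ gives $P{\sf m}_{K+K'}\norm{\bs h_{T_{01}}}{2} \leq \norm{\CRIop \bs h_{T_{01}}}{1}$; the triangle inequality combined with the upper $\RIPto(\Sigma_{K'}, {\sf M}_{K'})$ on each tail block gives
$$\norm{\CRIop \bs h_{T_{01}}}{1} \leq \norm{\CRIop \bs h}{1} + \sum_{j\geq 2}\norm{\CRIop \bs h_{T_j}}{1} \leq 2\epsilon + P{\sf M}_{K'} \sum_{j\geq 2} \norm{\bs h_{T_j}}{2}.$$
Substituting the tail estimate and relating $\norm{\bs h_{T_0}}{2}$ to $\norm{\bs h_{T_{01}}}{2}$ yields a linear inequality whose coefficient on $\norm{\bs h_{T_{01}}}{2}$ is exactly $P\gamma$ under~\eqref{eq:RIP-L2L1-bound-condition-BPDN}. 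Isolating gives $\norm{\bs h_{T_{01}}}{2} \lesssim \epsilon/(P\gamma) + ({\sf M}_{K'}/\gamma)\,\norm{\intdisc-\intdisc_K}{1}/\sqrt{K'}$, and then $\norm{\bs h}{2} \leq \norm{\bs h_{T_{01}}}{2} + \sum_{j \geq 2}\norm{\bs h_{T_j}}{2}$ combined with the tail bound delivers~\eqref{eq:bpdn-inst-opt} with $C_0 = \cl O({\sf M}_{K'}/\gamma)$ and $D_0 = \cl O(1/\gamma)$.

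The delicate point—where I expect the main difficulty—is producing exactly the $\tfrac{1}{\sqrt 2}$ factor in~\eqref{eq:RIP-L2L1-bound-condition-BPDN}. A naive use of $\norm{\bs h_{T_0}}{2} \leq \norm{\bs h_{T_{01}}}{2}$ only yields the coarser condition ${\sf m}_{K+K'} - {\sf M}_{K'}\sqrt{K/K'} > 0$. Recovering the sharper constant must exploit the orthogonal decomposition $\norm{\bs h_{T_{01}}}{2}^2 = \norm{\bs h_{T_0}}{2}^2 + \norm{\bs h_{T_1}}{2}^2$ together with a Cauchy-Schwarz balancing of the two block norms, in the spirit of~\cite[Lemma~2]{chen2015exact}. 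This algebraic refinement, rather than the structural RIP-plus-cone skeleton, is what needs careful handling to match the stated constant.
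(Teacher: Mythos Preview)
Your proposal is correct and follows essentially the same approach as the paper: the paper defers entirely to~\cite[App.~B]{leblanc24}, which itself adapts the standard cone-constraint-plus-block-decomposition argument of~\cite[Lemma~2]{chen2015exact} to the $\ell_2/\ell_1$ setting---precisely the skeleton you outline, including the refinement needed for the $\tinv{\sqrt 2}$ constant.
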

\begin{proof}
  The proof follows the proof given in~\cite[App.~B]{leblanc24}.
\end{proof}
The inequality~\eqref{eq:bpdn-inst-opt} is known as the \emph{instance optimality} of the \ref{eq:BPDN} estimate $\wt{\intdisc}$~\cite{foucart17}. This result expresses that the reconstruction error between the estimate $\wt{\intdisc}$ and the actual image $\intdisc$ is bounded by the sum of two terms: a first term that measures the deviation from the sparsity model---the sparser $\intdisc$ is, the smaller it deviates from its $K$-sparse approximation $\intdisc_K$ in the approximation error $\|\intdisc - \intdisc_K\|_1$---, and a second term that is proportional to the level of noise in the measurements. 

The technical condition~\eqref{eq:RIP-L2L1-bound-condition-BPDN}, which ensures the positivity of both $C_0$ and $D_0$ in~\eqref{eq:bpdn-inst-opt}, is satisfied if 
\begin{equation}
\ts  K'  > 8 \Big(\frac{{\sf M}_{K'}}{{\sf m}_{K+K'}}\Big)^2 K. \label{eq:bound-condition-bpdn-reloaded}
\end{equation}
In this case, $\gamma = \tinv{2\sqrt 2} {\sf m}_{K+K'}$, and, from~\cite[App.~B]{leblanc24}, $C_0 = 2(\sqrt 2+1) ({\sf M}_{K'}/{\sf m}_{K+K'})+2$ and $D_0 =  4(\sqrt 2+1)/{\sf m}_{K+K'}$. 

Interestingly, if both $P$ and $V$ sufficiently exceed the image sparsity level, the operator $\CRIop$ respects, with high probability, the $\RIPto$ required by Prop.~\ref{prop:L2L1}.
\begin{proposition}[$\RIPto$ for $\CRIop$ using asymmetric ROP] \label{prop:24_rip_rop}
Assume that assumptions \ref{h:bounded-FOV}-\ref{h:sketch-distrib} hold, with \ref{h:rip-visibility} set to sparsity level $\splev >0$ and distortion $\delta$ over the set $\Sigma_\splev$. For some values $C, c >0$, if   
\begin{equation} \label{eq:34_QM_vs_K2}
\ts P \geq C \splev \ln(\frac{12 e N}{\splev}),\ V \geq 4 \splev\, {\rm plog}(N,\splev),
\end{equation} 
then, with probability exceeding $1 - \exp(-c P)$, the operator $\CRIop$ respects the $\RIPto(\Sigma_\splev, {\sf m}_\splev, {\sf M}_\splev)$ with
\begin{equation} \label{eq:rip-l2l1}
\ts m_\splev > \frac{2}{3} c_1 \varpi \sqrt{\tfrac{V}{2N}},\ \text{and}\ M_\splev < \frac{4}{3} c_2 \varpi \sqrt{\tfrac{3V}{2N}},
\end{equation}
where the constants $c_1$ and $c_2$ depend only on the sub-Gaussian norm of the random sketching vectors $\{\bs\alpha_p \}_{p=1}^{P}$ and $\{\bs\beta_p \}_{p=1}^{P}$ hidden in $\bs R$ (see~\eqref{eq:32_irop}).
\end{proposition}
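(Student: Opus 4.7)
The overall strategy is to decompose the operator $\CRIop=\bs{RG}_0\bs F$ into its two probabilistic layers and control each separately before combining them. Write, for any $\bs v\in\Rbb^N$, $\bs G_0\bs F\bs v$ as the vectorization of the off-diagonal block-entries of the total interferometric matrix $\intMOm{\vintmap_{\bs v}}\in\cl H^{QB}$ built from the (discrete) image $\bs v$ according to~\eqref{eq:equiv-cont-discrt-interfero}. Each row of $\bs R$ then implements an asymmetric rank-one projection $\bs\alpha_{pb}^*\intMOmb[]{\vintmap_{\bs v}}\bs\beta_{pb}$ of a diagonal block, so that the quantity $\tfrac{1}{P}\|\CRIop\bs v\|_1$ is an empirical $\ell_1$-average of $P$ ROP measurements of a single Hermitian matrix with Frobenius norm equal (up to the Hermitian double-counting of off-diagonal entries) to $\|\bs G_0\bs F\bs v\|_2$. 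The target inequalities~\eqref{eq:rip-l2l1} should therefore arise as the composition of two isometries: the ROP concentration around Frobenius norm, and the visibility $\RIP$ from Assumption~\ref{h:rip-visibility} that converts Frobenius norm to $\|\bs v\|_2$ with the scaling factor $\varpi\sqrt{V/N}$.

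The concrete steps are as follows. First, invoke the pointwise ROP concentration Lemma~\ref{lem:27_concentration_ROP_A}, which, under the sub-Gaussian assumption~\ref{h:sketch-distrib}, gives a two-sided bound of the form $\tfrac{2}{3}c_1\|\bs A\|_F\leq \tfrac{1}{P}\|\bs R\vect(\bs A)\|_1\leq \tfrac{4}{3}c_2\|\bs A\|_F$ with probability at least $1-\exp(-c' P)$ for any fixed Hermitian $\bs A$, provided $P$ exceeds a constant. Second, apply this pointwise estimate to the random family $\{\bs A_{\bs v}=\intMOm{\vintmap_{\bs v}}:\bs v\in\Sigma_{\splev}\}$ through a standard covering/chaining argument: one covers the intersection of $\Sigma_\splev$ with the unit sphere by an $\varepsilon$-net of cardinality $\binom{N}{\splev}(C/\varepsilon)^{\splev}\leq (12eN/\splev)^{\splev}$, controls the deviation on each net point by the above concentration, and union-bounds. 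The linear dependence of the map $\bs v\mapsto\bs A_{\bs v}$ then allows extending the bound to all of $\Sigma_\splev$; imposing $P\gtrsim \splev\ln(12eN/\splev)$ ensures that the exceptional probability collapses to $\exp(-cP)$ for some smaller $c>0$. Third, on the intersection of this event with the event that~\ref{h:rip-visibility} holds at distortion $\delta\leq 1/2$ (which is valid as soon as $V\geq 4\splev\,\text{plog}(N,\splev)$), substitute $\|\bs A_{\bs v}\|_F^2=\|\bs G_0\bs F\bs v\|_2^2\in\bigl[\tfrac{\varpi^2 V}{2N},\tfrac{3\varpi^2 V}{2N}\bigr]\|\bs v\|_2^2$ to yield~\eqref{eq:rip-l2l1}.

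The main obstacle is the covering step, because the $\ell_1$-norm of a ROP vector is only sub-exponentially (not sub-Gaussian) concentrated, so a crude $\varepsilon$-net argument does not suffice to bring the RIP constants down to a uniform regime; I would use Bernstein-type deviation inequalities per net point and balance the net radius against the gap between $2/3$ and $4/3$ in~\eqref{eq:rip-l2l1} to absorb the discretization error, relying on the Mendelson-Lecué-type observation that Hermitian rank-one projections lie in a ``nice'' regime for $\ell_1$-isometries. A second delicate point is the identification $\|\bs A_{\bs v}\|_F=\|\bs G_0\bs F\bs v\|_2$: one has to account for the Hermitian symmetry (each off-diagonal visibility appears twice) and for the exclusion of the $QB$ diagonal entries carrying the DC component, which is precisely what motivated the centering step~\eqref{eq:34_irop_c}; Assumption~\ref{h:distinct-visib} guarantees that the relevant rows of $\bs G_0$ are all distinct so that no spurious multiplicity inflates the Frobenius norm. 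Once these two issues are handled, the stated probability $1-\exp(-cP)$ and the constants $c_1,c_2$ follow directly from the sub-Gaussian norms of $\bs\alpha_p$ and $\bs\beta_p$ provided by~\ref{h:sketch-distrib}.
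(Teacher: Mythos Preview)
Your plan is essentially the paper's own proof: pointwise ROP concentration (Lemma~\ref{lem:27_concentration_ROP_A}) composed with the visibility $\RIP$ at $\delta=1/2$, then a net over $\Sigma_{\splev}\cap\bb S^{N-1}$ with a union bound, then extension from the net to the full sparse cone. Two small corrections will save you effort.

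First, the factors $2/3$ and $4/3$ in~\eqref{eq:rip-l2l1} are \emph{not} part of the pointwise concentration; Lemma~\ref{lem:27_concentration_ROP_A} gives plain constants $c_1,c_2$ for a fixed matrix. The $2/3$ and $4/3$ come from the net-to-full-set step: the paper chooses a covering radius $\lambda=1/4$ and, since one can take net points $\intdisc'$ with $\supp\intdisc'=\supp\intdisc$, the normalized residual $(\intdisc-\intdisc')/\|\intdisc-\intdisc'\|$ is again in $\Sigma_\splev\cap\bb S^{N-1}$, so a simple geometric-series (recursive) argument yields the multiplicative corrections $(1-2\lambda)/(1-\lambda)=2/3$ and $1/(1-\lambda)=4/3$. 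This also explains the $12eN/\splev=3eN/(\splev\lambda)$ in~\eqref{eq:34_QM_vs_K2}.

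Second, your worry that ``a crude $\varepsilon$-net argument does not suffice'' because of sub-exponential tails is misplaced here. Lemma~\ref{lem:27_concentration_ROP_A} already delivers a two-sided bound with failure probability $\exp(-c_3P)$ for any fixed Hermitian matrix, so a plain union bound over the net of size $(12eN/\splev)^{\splev}$ works as soon as $P\gtrsim\splev\ln(12eN/\splev)$; no Bernstein refinement, chaining, or Mendelson--Lecu\'e small-ball machinery is needed. Likewise, the ``Hermitian double-counting'' you flag is a non-issue: by construction $\bs G_0$ vectorizes \emph{all} $Q(Q-1)B$ off-diagonal entries of the block-diagonal interferometric matrix, so $\|\bc I_{\rm h}\|_{\rm F}=\|\bs G_0\bs F\bs v\|_2$ holds exactly (see~\eqref{eq:equiv-frob-l2}), with no extra factor of $2$.
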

In this proposition, proven in Appendix \ref{app:proof}, while the bounds in~\eqref{eq:rip-l2l1} might not be tight, they cannot be arbitrarily close either. Still, the instance optimality~\eqref{eq:bpdn-inst-opt} holds as long as $c_2/c_1 = \cl O(1)$ since, thanks to~\eqref{eq:bound-condition-bpdn-reloaded},~\eqref{eq:RIP-L2L1-bound-condition-BPDN} can be verified when $K'>K$ is large enough. Indeed, combining Props.~\ref{prop:L2L1} and \ref{prop:24_rip_rop} and using the bounds in~\eqref{eq:rip-l2l1} that are independent of $\splev$, since $8 ({\sf M}_{K'}/{\sf m}_{K+K'})^2 < \frac{2\sqrt{3} c_2}{c_1}$,~\eqref{eq:bound-condition-bpdn-reloaded} holds if 
$$
K'> K \tfrac{2\sqrt{3} c_2}{c_1}.
$$
Therefore, provided the IROP sensing $\CRIop$ satisfies the $\RIPto(\Sigma_\splev,{\sf m}_\splev, {\sf M}_\splev)$ for the levels $\splev\in\{K',K+K'\}$, the instance optimality~\eqref{eq:bpdn-inst-opt} holds with 
\begin{equation*}
  \resizebox{\linewidth}{!}{$
\ts C_0 <  2(1+\sqrt 2) \sqrt{\frac{c_2}{c_1}} = \cl O(1),\ D_0 = \cl O\Big(\frac{\sqrt{N}}{\varpi \sqrt{V}}\Big) = \cl O\big(\frac{N}{L^2 Q \sqrt{B}}\big)$.
}
\end{equation*}
The condition on $K'$ and $K$ above has a direct impact on the number of IROPs $P$ and visibilities $V$, \ie on the \emph{sample complexities} of the sensing model~\cite{foucart17}. While $P$ and $V$ must be set large enough in~\eqref{eq:34_QM_vs_K2} when $K_0$ is set to $\splev=(K+K')> (\frac{2c_2 \sqrt{3}}{c_1} +1) K$, the impact of the sparsity error $\|\intdisc - \intdisc_K\|_1$ in~\eqref{eq:bpdn-inst-opt} is, however, attenuated by $1/\sqrt{K'} < \sqrt{c_1} / (\sqrt{2c_2 K})$. 

For a fixed FOV $L^2$, we also observe a meaningful amplification of the noise error by $D_0$ when the sampling grid $\cl G_N$ is too large compared to $Q \sqrt{B}$: if the number of pixels $N$ is too small,~\ref{h:band-limitedness-fvign} may not be verified, since the image bandwidth lower bounds $N$; if $N$ is too large the noise error in~(\ref{eq:bpdn-inst-opt}) is vacuous.

\section{Numerical Analysis} \label{sec:numerical}

\begin{figure*}[t]
    \centering   
    \includegraphics[width=\linewidth]{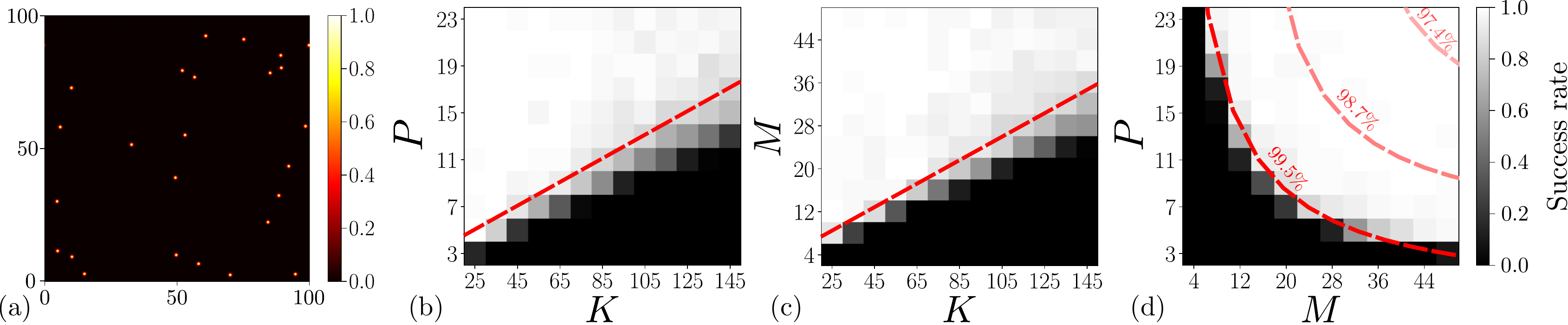}
    \caption{(a) Example of a randomly generated $K$-sparse sky with $K=25$. (b-d) Phase transition diagrams showing $P\times M$ MROPs of $B=100$ different $27 \times 27$ interferometric matrices for a $K$-sparse image $\intdisc$ (with $M=50$ in (b), $P=25$ in (c), and $K=25$ in (d)). One considers the $uv$-coverage shown in Fig.~\ref{fig:31_antennas_and_uv}, ROP using $ \bs\alpha_{pb}$, $\bs\beta_{pb}$ with $(\bs\alpha_{pb})_q,~(\bs\beta_{pb})_q \simiid e^{\im \cl U[0,2\pi)}$, and Bernoulli modulation vectors $\bs\gamma_{mb} \simiid \{\pm 1\}$, $\forall b\in\upto{B},~p\in\upto{P},~m\in\upto{M}$. Each pixel is constructed with $S=80$ reconstruction trials solving~\eqref{eq:35_bpdn} where we consider success if SNR$\ge 40$dB. The probability of success ranges from black (0\%) to white (100\%). Dashed red lines highlight the transition frontiers. The level curves with compression factors ($97.4\%$, $98.7\%$, $99.5\%$) are given in (d).}
    \label{fig:35_transmat}
\end{figure*}

We performed extensive Monte Carlo simulations with $S$ trials on the reconstruction of sparse images $\intdisc \in \Sigma_K$ with $N=100\times100$ pixels from the (noiseless) MROP imaging model $\bar{\bs z} = \bs M\bs D\bs G \bs F \intdisc $ derived in Sec.~\ref{sec:model}. The basis pursuit denoise program with an $\ell_2$-norm fidelity (or BPDN$_{\ell_2}$) is used to compute the estimate $\wt{\intdisc}$ as 
\begin{equation} \label{eq:35_bpdn}
    \wt{\intdisc} = \argmin_{\intdisc'}~ \norm{\intdisc'}{1} ~\st \norm{\bar{\bs z} - \bs M\bs D\bs G \bs F \intdisc' }{2} \leq 10^{-2}.
\end{equation}

\begin{remark}
    Using SPGL1\footnote{(Python module: \url{https://github.com/drrelyea/spgl1}).},~\eqref{eq:35_bpdn} is solved in its equivalent unconstrained formulation~\cite{pareto} using the \emph{proximal gradient} (\aka forward-backward) algorithm~\cite{beck_book,Parikh_Boyd}.
\end{remark}

While the constrain is imposed in the $\ell_2$-norm hence deviates from the theoretical setting established in Sec.~\ref{sec:recovery} that uses the $\ell_1$-norm, it avoids solving an internal minimization problem for computing the proximal operator of that constrain. Despite these differences, we provide similar conclusions than in Sec.~\ref{sec:recovery} for the sample complexity, as shown below.
In a noiseless setting (\ie $\bs\xi = \bs 0$), BPDN$_{\ell_2}$ is equivalent to BPDN$_{\ell_1}$.
Indeed, for a given $\bs z$ and $\bs B := \bs{MDGF}$, defining the feasibility constraint sets $\cl C := \{\intdisc \in \Rbb_+: \bs z = \bs B \intdisc \}$ , $\cl C_q(\epsilon) := \{\intdisc \in \Rbb_+: \norm{\bs z - \bs B \intdisc }{q} < \epsilon\}$ associated with the basis pursuit (BP) and BPDN$_{\ell_q}$ programs (for $q \in \{1,2\}$), the noiseless setting corresponds to the matching limit cases $\lim_{\epsilon \rightarrow 0}~ \cl C_1(\epsilon) = \lim_{\epsilon \rightarrow 0}~ \cl C_2(\epsilon) = C$.

We consider the reconstruction of a 2-D image supported in a square window so that one can assume $\vintmap=\intmap$, and satisfying assumptions \ref{h:bounded-FOV} and \ref{h:band-limitedness-fvign}. A vectorized image $\intdisc \in \Rbb^N$ was generated with a $K$-sparse support picked uniformly at random in $\upto{N}$, its $K$ nonzero components being all set to $1$. An example of a sparse sky\footnote{the image shown has been slightly blurred with a Gaussian kernel to enhance visual appeal.} image with $K=25$ is shown in Fig.~\ref{fig:35_transmat}(a). The partial Fourier sampling induced by the NUFFT operator $\bs{GF}$ is fixed by a realistic $uv$-coverage of the VLA~\cite{thompson80vla} shown in Fig.~\ref{fig:31_antennas_and_uv} with $Q=27$ antennas and $B=100$ batches corresponding to a total integration time of $5$ hours. 
At each simulation trial, we used $P B$ sketching vectors $\bs\alpha_{pb}, \bs\beta_{pb} \in \Cbb^Q$, $\forall p\in\upto{P}$, \iid sub-Gaussian\footnote{Similar results were obtained with random Gaussian sketching vectors.} with $(\bs\alpha_{pb})_q \simiid e^{\im \cl U[0,2\pi)}$, $q \in \upto{Q}$, and similarly for $\bs\beta_{pb}$. This choice is more practical for implementation on real antennas as it only requires tuning the phase by beamforming techniques. The random modulation vectors $\{ \bs\gamma_m \}_{m=1}^{M}$ were randomly picked as $\bs\gamma_m \simiid \cl U\{\pm1\}$. 

In Fig.~\ref{fig:35_transmat}, the success rates---\ie the percentage of trials where the reconstruction SNR exceeded $40$dB---were computed for $S=80$ trials per value of $(K,P,M)$, and for a range of $(K,P,M)$ specified in the axes. 

We observe in Fig.~\ref{fig:35_transmat}(b-d) that high reconstruction success is reached as soon as the number of MROPs satisfies $P M \geq C K$, with $C\simeq 5$. This is closely related to the sample complexity obtained for the IROP scheme in Prop.~\ref{prop:24_rip_rop}, where $P \geq C'K$ projections were needed (up to log factors). Here, the transition diagrams seem to indicate that $P$ and $M$ play the same role in the sample complexity, with only the product $P M$ mattering. This is also confirmed in Fig.~\ref{fig:35_transmat}(d) where the transition frontier in red describes a hyperbola at $(P, M)$ coordinates satisfying $P M \approx 150$. This implies that, for $K=25$, image recovery can be obtained with a number of MROPs which is only $0.2\%$ of the number of visibilities $V=70200$. A comparison between the sample complexity $P M$ needed for the MROPs and $P$ needed for the IROPs would have been compelling, but the IROP model is impossible to compute because of a much higher computational cost, discussed in another upcoming publication.

\section{Conclusion and Perspectives} \label{sec:conclusion}

In this paper, we focused on the acquisition process of radio-interferometry and proposed a \emph{compressive sensing} technique that can be applied at the level of the antennas. The novelty of the proposition was to show that \emph{random beamforming} is tantamount to applying ROPs of the covariance matrix containing the visibilities, and that these ROPs can be efficiently aggregated across time by random modulations. We provided recovery guarantees for the IROP model and observed the derived sample complexities under numerical conditions for the MROP imaging model. 

This paper presented a detailed derivation of the acquisition and imaging models, focusing specifically on reconstructing sparse images from the imaging models.
The sample complexity of the IROP model has additionally been characterized through the framework of compressive sensing theory.
However, several important aspects related to realistic imaging were left beyond the scope of this analysis. A subsequent paper will address key questions, such as the computational complexity of the imaging models with the integration of visibility weighting into the MROP framework, and a more comprehensive evaluation of reconstructions for realistic sky images. That work will center on practical considerations and applications in astrophysical imaging.

%%%%%%%%%%%%%%%%%%%%%%%%%%%%%%%%%%%%%%%%%%%%%%
\section*{Acknowledgment}
%%%%%%%%%%%%%%%%%%%%%%%%%%%%%%%%%%%%%%%%%%%%%%

Computational resources have been provided by the supercomputing facilities of UCLouvain (CISM) and the Consortium des Equipements de Calcul Intensif en Fédération Wallonie Bruxelles (CECI) funded by the Fond de la Recherche Scientifique de Belgique (FRS-FNRS). This project has received funding from FRS-FNRS (QuadSense, T.0160.24). The research of YW was supported by the UK Research and Innovation under the EPSRC grant EP/T028270/1 and the STFC grant ST/W000970/1.

\begin{appendices}

%%%%%%%%%%%%%%%%%%%%%%%%%%%%%%%%%%%%%%%%%
\section{Connections to MCFLI} \label{app:MCFLI}
%%%%%%%%%%%%%%%%%%%%%%%%%%%%%%%%%%%%%%%%%

We here report the commonalities and differences between compressive radio-interferometry (CRI) and the sensing model supporting multi-core fiber lensless imaging (MCFLI)~\cite{leblanc24}. Let us first recall the expression of a single-pixel measurement in MCFLI:
\begin{equation} \label{eq:31_mcfli_meas}
    y = \bs\alpha^* \intMOm{wf} \bs\alpha + n,
\end{equation}
where $f$ represents a continuous image (\eg from a biological sample), $w$ is the field-of-view window induced by each core aperture, and $n$ is a measurement noise.
Eq.~\eqref{eq:31_mcfli_meas} must be compared with the ROP model for radio-interferometric measurements in~\eqref{eq:31_three_rop}.
%
%%%%%%%%%%%%%
\paragraph{Common features}

The role of the $Q$ antennas in compressive radio-interferometry (CRI) is analogous to that of the $Q$ cores in MCFLI. In both cases appears an \emph{interferometric matrix} $\intMOm[]{}$ encoding Fourier samples (or \emph{visibilities}) of a (stationary) 2-D image of interest taken precisely in the difference set $\Omega-\Omega$. The complex exponential terms (resp. $e^{\frac{\im2\pi}{\lambda z}\bs p_q^\top \bs x}$ and $e^{\frac{\im2\pi}{\lambda}\bs p_q^\perp(t)^\top \bs l}$) of the Fourier transforms encoded into the interferometric matrices come both from a dephasing of the electromagnetic signal due to the core/antenna location.  The observed images are both vignetted---$f_{\rm v}:=wf$ and $\vintmap:=g^2\intmap$. In either applications, a \emph{compressive imaging} procedure is considered by applying random ROPs of the interferometric matrix. The \emph{sketching vector} $\bs\alpha$ (and $\bs\beta$ for CRI) is set by choosing the complex amplitude of each core (resp. antenna) thanks to a spatial light modulation.

%%%%%%%%%%%%%
\paragraph{Differences}

In MCFLI, we image a 2-D plane perpendicular to the distal end of the MCF. In CRI, however, we consider an image in \emph{direction-cosine} coordinates $\bs l$. 
The MCFLI application is completely \emph{stationary}--there is no dependence on time $t$. The measured signal $y$ is deterministic; no expectation needs to be approximated by summing many measurements over time. In CRI, the time dependence of the antenna locations can be exploited to sense many interferometric matrices, thus obtaining a denser Fourier sampling than in MCFLI. This is why we get only $\intMOm[]{}$ in MCFLI, but $\{ \intMOmb[]{} \}\btoB$ in CRI. 
In MCFLI, the symmetric ROPs are \emph{imposed} by the sensing mechanism. In CRI, the ROPs are \emph{pursued} in order to compress the measurement data. Thus, asymmetric ROPs are fully accessible. Furthermore, the noise models are different. In~\eqref{eq:31_mcfli_meas}, the noise $n$ is the thermal noise at the single-pixel detector. In~\eqref{eq:31_three_rop}, the final measurements are obtained by correlations. The thermal noise at the receivers is translated into a deterministic bias $\bs\alpha^* \bs \Sigma_{\bs n} \bs\beta$ that can be removed. The additive Gaussian noise sources in the visibilities comes from the statistical noise induced by the sample covariance and some other model imperfections.

%%%%%%%%%%%%%%%%%%%%%%%%%%%%%%%%%%%%%%%%%
\section{Matrix form of interferometric measurements} \label{app:matrix}
%%%%%%%%%%%%%%%%%%%%%%%%%%%%%%%%%%%%%%%%%

Interferometric measurements associated to a discretized (vignetted) image find a natural matrix formulation. Indeed, writing the discrete image $\intdisc = (\vintmap[\bs n])_{\bs n\in\upto{N_1}^2} \in \Rbb^N$ with $N = N_1^2$ and 
\begin{equation} \label{eq:sigd}
    \vintmap[\bs n]:=\vintmap(\bs n \Delta) = \vintmap(\bs l) \sum_{\bs n \in \upto{N}^2} \delta(\bs l-\bs n\Delta),
\end{equation}
the definition of the interferometric matrix in~\eqref{eq:31_vign} can be particularized to $\intdisc$ and to $t=(b-1/2)MT$ to give
\begin{equation} \label{eq:intMOmbd}
    \big(\intMOmb[]{\intdisc} \big)_{jk} := \Delta^2 \sum_{\bs n\in\upto{N_1}^2} \vintmap[\bs n] e^{\frac{-\im 2\pi}{\lambda} (\bs p_{bk}^\perp-\bs p_{bj}^\perp)^\top \bs n \Delta}.
\end{equation}
In order to express $\intMOmb[]{\intdisc}$ in discrete matrix form, the trick consists in inserting $\sum_{\bs n' \in \upto{N_1}^2} \delta_{\bs n \bs n'}=1$ into~\eqref{eq:intMOmbd} so as to get  
\begin{equation*}
    \big(\intMOmb[]{\intdisc} \big)_{jk} = \Delta^2 \sum_{\bs n, \bs n' \in\upto{N_1}^2} e^{\frac{-\im 2\pi}{\lambda} \bs n^\top \bs p_{bk}^\perp \Delta} \vintmap[\bs n] \delta_{\bs n \bs n'} e^{\frac{\im 2\pi}{\lambda} \bs n^\top \bs p_{bj}^\perp \Delta}.
\end{equation*}
Defining the matrix of complex exponentials $\bs\Gamma_b \in \Cbb^{Q\times N}$ \st $(\bs\Gamma_b)_{qn} := \Delta e^{\frac{-\im2\pi}{\lambda} \bs n^\top \bs p_{bq}^\perp \Delta}$ for the 2-D component $\bs n$ associated to the flattened index $n$ and the diagonal matrix $\bs D_{\intdisc} \in\Rbb^{N\times N}$ filled with the vectorized discrete image $\intdisc$, the interferometric matrix at batch $b$ writes 
\begin{equation} \label{eq:intMOmb1}
    \intMOmb[]{\intdisc} := \bs\Gamma_b \bs D_{\intdisc} \bs\Gamma_b^*.
\end{equation}
While~\eqref{eq:intMOmb1} is already a matrix formulation, it is common to write the decomposition $\bs\Gamma_b := \bs W_b \bs F$ where $\bs F \in \cl H^N$ is the 2-D fast Fourier transform (FFT) matrix and $\bs W_b \in \Cbb^{Q\times N}$ is a matrix interpolating the on-grid frequencies of the FFT to the set of antenna positions $\Omega_b$. One finally gets 
\begin{equation} \label{eq:intMOmb}
    \intMOmb[]{\intdisc} := \bs W_b \bs F \bs D_{\intdisc} \bs F^* \bs W_b^*.
\end{equation}

%%%%%%%%%%%%%%%%%%%%%%%%%%%%%%%%%%%%%%%%%
\section{IROP and MROP as a global ROP model} \label{app:rop}
%%%%%%%%%%%%%%%%%%%%%%%%%%%%%%%%%%%%%%%%%
We explain here that both the IROP and the MROP models can be recast as a ROP model applied to the larger interferometric matrix composed of all batches, \ie 
\begin{equation} \label{eq:32_total_intma}
    \intMOm[]{} := \begin{bmatrix}
        \bs{\cl I}_{\Omega_1} & & \\
        & \ddots & \\
        & & \bs{\cl I}_{\Omega_B}
    \end{bmatrix},
\end{equation}
with specific sketching vectors. Moreover, the reader can easily show that the same conclusions apply to the compressive sensing operators of the IROP and MROP fed by the antenna signals, \ie they are equivalent to the ROPs of a global block diagonal sample covariance matrix built from each covariance $\sampcovmat_b(\cl X_b)$ similarly to~\eqref{eq:32_total_intma}.

First, it is easy to rewrite the $p$-th measurement of the IROP model as
\begin{align} \label{eq:32_sym_irop}
\begin{split}
    % \resizebox{\linewidth}{!}{$
    \bar z_p &= \bs\alpha_p^* \intMOm[]{} \bs\beta_p = \begin{bmatrix}
        \bs\alpha_{p1}^* & \ldots & \bs\alpha_{pB}^*
    \end{bmatrix}
    \begin{bmatrix}
        \bs{\cl I}_{\Omega_1} & & \\
        & \ddots & \\
        & & \bs{\cl I}_{\Omega_B}
    \end{bmatrix}
    \begin{bmatrix}
        \bs\beta_{p1} \\ \vdots \\ \bs\beta_{pB}
    \end{bmatrix} \\
    &\ts = \sum\btoB \bs\alpha_{pb}^* \bs{\cl I}_{\Omega_b} \bs\beta_{pb}.
\end{split}
\end{align}
Therefore, the ROP measurement vector writes as $\bar{\bs z} = (\bar z_p)_{p=1}^{P} := \ropA (\intMOm[]{})$ where the \emph{sketching operator} $\ropA$ defines $P$ ROP~\cite{chen2015exact,cai15} of any Hermitian matrix $\bs H \in \cl H^{QB}$ with
\begin{equation} \label{eq:32_ROP_op}
    \ropA(\bs H)  := (\fro{\bs \alpha_p \bs \beta_p^*}{\bs H} )_{p=1}^{P}.
\end{equation} 
The name IROP given to this approach appears even more clearly in~\eqref{eq:32_sym_irop}; each ROP measurement is given as a ROP of the total interferometric matrix---integrating all the batches together. This viewpoint is useful for the guarantees given in Sec.~\ref{sec:recovery}.

Second, regarding the MROP model, by introducing the modulation vectors $\bs\epsilon_m,{\bs\epsilon}'_m \in \{\pm 1\}^B$ with $\epsilon_{mb} \simiid \cl U\{\pm1\}$, $\forall m \in \upto{M}$, and the diagonal matrices $\bs D_{\bs\epsilon_m}, \bs D_{{\bs\epsilon}'_m} \in \{ \pm 1, 0 \}^{QB\times QB}$ with $\bs D_{\bs\epsilon_m} := \diag(\bs\epsilon_m) \otimes \Id_Q$,
and equivalently for $\bs D_{{\bs\epsilon}'_m}$, the $(p,m)$-th measurement writes $\bar z_{pm} = (\bs D_{\bs\epsilon_m} \bs\alpha_p)^* \intMOm[]{} (\bs D_{{\bs\epsilon}'_m} \bs\beta_p)$, since $\bar z_{pm}$ equals
\begin{align} \label{eq:32_modul_interpret}
\begin{split}
     &\begin{bmatrix}
        \epsilon_{m1} \bs\alpha_{p1}^* & \ldots & \epsilon_{mB} \bs\alpha_{pB}^*
    \end{bmatrix} 
    \begin{bmatrix}
        \bs{\cl I}_{\Omega_1} & & \\
        & \ddots & \\
        & & \bs{\cl I}_{\Omega_B}
    \end{bmatrix}
    \begin{bmatrix}
        \epsilon'_{m1} \bs\beta_{p1} \\ \vdots \\ \epsilon'_{mB} \bs\beta_{pB}
    \end{bmatrix} \\
    &= \sum\btoB \epsilon_{mb} \epsilon'_{mb} \bs\alpha_{pb}^* \bs{\cl I}_{\Omega_b} \bs\beta_{pb} = \sum\btoB \epsilon_{mb} \epsilon'_{mb} \bar y_{pb} = \sum\btoB \gamma_{mb} \bar y_{pb}.
\end{split}
\end{align}
Setting $\gamma_{mb} := \epsilon_{mb} \epsilon'_{mb}$, we recover in~\eqref{eq:32_modul_interpret} the modulation principle introduced in~\eqref{eq:32_modul1}. 

In summary, the IROP and the MROP models are thus associated with the rank-one projections of the total interferometric matrix $\intMOm[]{}$ over the specific sketching vectors $\{\bs\alpha_{p}, \bs\beta_{p}\}_{p=1}^P$ and $\{ \bs D_{\bs\epsilon_m} \bs\alpha_p, \bs D_{{\bs\epsilon}'_m} \bs\beta_p\}_{p\in \upto{P},m \in \upto{M}}$, respectively. We also observe that for the MROP, the sketching vectors are \emph{structured} and not densely random: while the corresponding sketching vectors collectively occupy a space of $2MP \times QB$ dimensions, they depends only on $2B(PQ + M)$ random variables, \ie the ratio between their free parameters and their total size is $(2MP \times QB) / 2B(PQ + M) = (1/M) + (1/PQ)$. The question of showing how these structured random sketching vectors enables the respect of the $\RIPto$ by the sensing operator $\bs M\bs D\bs G \bs F$ (hence generalizing \ref{prop:24_rip_rop} to the MROP) is left to a future work.

%%%%%%%%%%%%%%%%%%%%%%%%%%%%%%%%%%%%%%%%%%%%%%
\section{Proof of Prop.~\ref{prop:24_rip_rop}} \label{app:proof}
%%%%%%%%%%%%%%%%%%%%%%%%%%%%%%%%%%%%%%%%%%%%%%

The proof Prop.~\ref{prop:24_rip_rop} will follow the same reasoning as the proof given in~\cite[App.~C]{leblanc24}, except that the concentration of symmetric ROP measurements around the matrix they are projecting, namely~\cite[Lemma~7]{leblanc24}, is now replaced by a tighter concentration for \emph{asymmetric} ROPs given in~\cite[Prop.~1]{chen2015exact}. 

We recall the ROP interpretation $\bar{\bs z} := \ropA (\intMOm[]{})$ of the IROP model given in App.~\ref{app:rop}. Two key observations can be made about $\intMOm[]{}$. First $\intMOm[]{} \in \cl H^{QB}$ as $\intMOmb[]{} \in \cl H^Q~~ \forall b \in \upto{B}$. Second, 
\begin{equation*} 
    \norm{\intMOm[]{}}{\rm F}^2 = \sum\btoB \norm{\intMOmb[]{}}{\rm F}^2 = \sum\btoB \norm[\big]{\bs G_b \bs F \intdisc }{2}^2 
    = \norm[\big]{\bs G \bs F \intdisc}{2}^2.
\end{equation*}
The same holds for the \emph{hollowed} interferometric matrix $\bc I_\rmh := \intMOm[]{} - \diag(\diag(\intMOm[]{}))$ that results from the centering step described in Sec.~\ref{sec:centering}:
\begin{equation} \label{eq:37_hollow_norms}
    \norm{\bc I_{\rm h}}{\rm F}^2 = \norm[\big]{\bs G_0 \bs F \intdisc}{2}^2,
\end{equation}
with $\bs G_0$ removing the DC component from $\bs G$, and $\bar{\bs z}^{\rm c} := \ropA (\bc I_\rmh)$.
\begin{remark}
    With these two observations, the proof given in~\cite[App.~C]{leblanc24} could be entirely reused, providing a $\RIPto$ for symmetric ROP measurements of zeromean images. However, it provides a less tight concentration.
\end{remark}
Next,~\cite[Prop.~1]{chen2015exact} is reminded here.
\begin{lemma}[Concentration of ROP in the $\ell_1$-norm~\cite{chen2015exact}] \label{lem:27_concentration_ROP_A}
Supposing Assumption~\ref{h:sketch-distrib} holds, given a matrix $\bc J \in \cl H^{QB}$, there exist universal constants $c_1,c_2,c_3>0$ such that with probability exceeding $1 - \exp(- c_3 P)$, 
    \begin{equation} \label{eq:37_concentration_A}
    c_1 \norm{\bc J}{\rm F}  \leq \tinv{P} \|\ropA(\bc J)\|_1 \leq c_2 \norm{\bc J}{\rm F}.
\end{equation}
\end{lemma}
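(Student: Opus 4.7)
The statement concerns the pointwise (in $\bc J$) concentration of $\tinv{P}\|\ropA(\bc J)\|_1 = \tinv{P}\sum_{p=1}^P |\bs\alpha_p^* \bc J \bs\beta_p|$ around a quantity proportional to $\|\bc J\|_{\rm F}$. My plan would proceed in two stages: (i) show that the scalar expectation $\mu := \bb{E}|\bs\alpha^* \bc J \bs\beta|$ is two-sided bounded by absolute constants times $\|\bc J\|_{\rm F}$; and (ii) apply a scalar concentration inequality for sums of \iid sub-exponential random variables to control $\tinv{P}\|\ropA(\bc J)\|_1$ around $\mu$, whose independence across $p \in \upto{P}$ is precisely what motivates restricting to the IROP (rather than MROP) setting.

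For stage (i), I would start from $X := \bs\alpha^* \bc J \bs\beta = \sum_{i,j} \bar\alpha_i J_{ij} \beta_j$. Using independence of $\bs\alpha$ and $\bs\beta$ and the zero-mean, unit-variance \iid entry assumption (induced by Assumption~\ref{h:sketch-distrib} after normalization), a direct second-moment computation gives $\bb{E}|X|^2 = \|\bc J\|_{\rm F}^2$. The upper bound $\mu \leq \|\bc J\|_{\rm F}$ is immediate from Cauchy--Schwarz. For the matching lower bound, I would invoke the Paley--Zygmund inequality, which requires controlling the fourth moment $\bb{E}|X|^4$. Expanding by independence and using sub-Gaussianity to dominate joint entrywise moments, one shows $\bb{E}|X|^4 \leq C_4 \|\bc J\|_{\rm F}^4$ with $C_4$ depending only on the sub-Gaussian norms of the sketch entries. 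Paley--Zygmund then delivers $\mu \geq c_1^{(0)} \|\bc J\|_{\rm F}$ for an explicit $c_1^{(0)} > 0$.

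For stage (ii), each $Y_p := |\bs\alpha_p^* \bc J \bs\beta_p|$ is sub-exponential: conditioning on $\bs\alpha_p$, the mapping $\bs\beta_p \mapsto \bs\alpha_p^* \bc J \bs\beta_p$ is sub-Gaussian with parameter bounded by a constant times $\|\bc J^*\bs\alpha_p\|$, and marginalizing via Hanson--Wright-type bounds for bilinear forms of independent sub-Gaussian vectors yields $\|Y_p\|_{\psi_1} \leq C \|\bc J\|_{\rm F}$, with $C$ depending only on the sketch sub-Gaussian norms. Bernstein's inequality for \iid sub-exponential variables, applied to $\sum_p (Y_p - \mu)$ with deviation $t = \eta \mu$ for a small absolute constant $\eta > 0$, then yields
\begin{equation*}
\bb{P}\!\left( \big| \tinv{P}\|\ropA(\bc J)\|_1 - \mu \big| > \eta \mu \right) \leq 2 \exp(- c_3 P),
\end{equation*}
for some $c_3 > 0$ depending on $\eta$, $c_1^{(0)}$, and the sub-Gaussian norms. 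Combining this with the two-sided bounds on $\mu$ from stage (i) gives the claim with $c_1 = (1-\eta) c_1^{(0)}$ and $c_2 = 1+\eta$.

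The main obstacle is the lower bound in stage (i): sub-Gaussianity ensures concentration of $X$ \emph{around zero} but not \emph{away from zero}, so an anti-concentration-type bound on $\mu$ genuinely requires the fourth-moment control inherent in Paley--Zygmund. Expanding $\bb{E}|X|^4$ produces diagonal terms, cross contributions such as $\sum_{i \neq j}|J_{ij}|^2$, and contractions of $\bc J$ with itself; each is $\cl O(\|\bc J\|_{\rm F}^4)$ times constants depending only on the fourth sub-Gaussian moment of the entries, but the combinatorial bookkeeping—especially keeping the constant universal and independent of the matrix dimension $QB$—is delicate. Beyond this, the result is pointwise in $\bc J$ and requires no union bound over a sparse-image set, which is precisely what allows it to be used as the black-box input to the uniform $\RIPto$ argument of Prop.~\ref{prop:24_rip_rop}.
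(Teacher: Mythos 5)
The paper does not prove this lemma at all: it is imported as a black box, introduced with the sentence ``\cite[Prop.~1]{chen2015exact} is reminded here,'' and its only role is to feed Corollary~\ref{cor:27_concentration_B} and the covering argument of Prop.~\ref{prop:24_rip_rop}. So there is no in-paper proof to compare against; what you have written is a reconstruction of the cited external result, and it is a correct one, following the standard two-stage route for mixed-norm $\ell_2/\ell_1$ concentration of rank-one projections. Stage (i) is sound: $\bb E|X|^2=\norm{\bc J}{\rm F}^2$ follows from independence of $\bs\alpha,\bs\beta$ and the zero-mean/unit-variance normalization, Cauchy--Schwarz gives the upper bound on $\mu=\bb E|X|$, and Paley--Zygmund with a fourth-moment bound $\bb E|X|^4\leq C_4\norm{\bc J}{\rm F}^4$ gives the matching lower bound; this anti-concentration step is indeed the crux, and the constant is dimension-free for exactly the reason you give---every contraction appearing in the fourth-moment expansion is dominated by $\norm{\bc J}{\rm F}^4$ with constants depending only on the entries' moments. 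Stage (ii) is also sound: the bilinear chaos $\bs\alpha_p^*\bc J\bs\beta_p$ has $\psi_1$-norm $\lesssim\norm{\bc J}{\rm F}$ (the Hanson--Wright mixed tail collapses to a purely sub-exponential one since $\|\bc J\|\leq\norm{\bc J}{\rm F}$), so Bernstein over the $P$ independent ROPs at deviation $t=\eta\mu$ yields the two-sided bound; and your remark that independence across $p$ is what confines the result to IROP rather than MROP matches the paper's own stated motivation for that simplification. Two cosmetic repairs: your failure probability comes out as $2\exp(-c_3P)$ rather than $\exp(-c_3P)$, which should be absorbed by shrinking the constant (e.g. $2\exp(-c_3P)\leq\exp(-c_3P/2)$ once $P\geq 2/c_3$); and Assumption~\ref{h:sketch-distrib} literally states only that the sketches have \iid sub-Gaussian components, so the zero-mean, unit-variance normalization you rely on should be declared explicitly (it does hold for the paper's unit-modulus random-phase sketches).
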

As a simple corollary to the previous lemma, we can now establish the concentration of $\CRIop \intdisc \in {\Rbb}^{P}_{+}$ in the $\ell_1$-norm for an arbitrary $K$-sparse vector $\intdisc \in \Sigma_K$.
\begin{corollary}[Concentration of $\CRIop$ in the $\ell_1$-norm] \label{cor:27_concentration_B}
In the context of Lemma~\ref{lem:27_concentration_ROP_A}, suppose that assumptions \ref{h:bounded-FOV}-\ref{h:sketch-distrib} are respected, with~\ref{h:rip-visibility} set to sparsity $K_0 >0$ and distortion $\delta=1/2$. Given $\intdisc \in \Sigma_K$, and the operator $\CRIop$ defined in~(\ref{eq:32_irop}) from the $P$ ROP measurements and the $V=|\cl V_0|=Q(Q-1)B$ nonzero visibilities with
\begin{equation*}
V \geq 4 K_0\, {\rm plog}(N,K_0),
\end{equation*}
we have, with a failure probability smaller than $\exp(- c_5 P)$, for some $c_5>0$,  
\begin{equation*} \label{eq:37_RIP_concentration-for-ilerop}
    c_1' \norm{\intdisc}{2}  \leq \tinv{P} \norm{\CRIop\intdisc}{1} \leq c_2' \norm{\intdisc}{2}.
\end{equation*} 
\end{corollary}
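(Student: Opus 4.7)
The plan is to obtain the $\ell_1/\ell_2$ concentration for $\CRIop$ by chaining three ingredients: (i) the ROP concentration of Lemma~\ref{lem:27_concentration_ROP_A} applied to the (hollowed) total interferometric matrix; (ii) the Frobenius-to-$\ell_2$ identity~\eqref{eq:37_hollow_norms} coming from the centering step; and (iii) the visibility-level $\RIP$ granted by Assumption~\ref{h:rip-visibility}.

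First, I would fix the sparse image $\intdisc \in \Sigma_K$ and form the associated hollow interferometric matrix $\bc I_\rmh \in \cl H^{QB}$, noting from App.~\ref{app:rop} (combined with the centering of Sec.~\ref{sec:centering}) that $\CRIop\intdisc = \bs{RG}_0 \bs F \intdisc = \ropA(\bc I_\rmh)$. Applying Lemma~\ref{lem:27_concentration_ROP_A} to the deterministic matrix $\bc J := \bc I_\rmh$, with failure probability at most $\exp(-c_3 P)$, yields
\begin{equation*}
    c_1 \norm{\bc I_\rmh}{\rm F} \;\leq\; \tinv{P} \norm{\CRIop\intdisc}{1} \;\leq\; c_2 \norm{\bc I_\rmh}{\rm F}.
\end{equation*}
Here $\intdisc$ is fixed, so no union bound is required, and the probability is taken only over the random sketching vectors $\{\bs\alpha_p, \bs\beta_p\}_{p=1}^P$.

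Second, I would use the key identity~\eqref{eq:37_hollow_norms}, namely $\norm{\bc I_\rmh}{\rm F}^2 = \norm{\bs G_0 \bs F \intdisc}{2}^2$, which is valid because the DC component has been subtracted, so that the off-diagonal Fourier samples encoded in $\cl V_0$ (distinct by Assumption~\ref{h:distinct-visib}) exhaust the nonzero entries of $\bc I_\rmh$. Substituting, one obtains the intermediate bound
\begin{equation*}
    c_1 \norm{\bs G_0 \bs F \intdisc}{2} \;\leq\; \tinv{P} \norm{\CRIop\intdisc}{1} \;\leq\; c_2 \norm{\bs G_0 \bs F \intdisc}{2}.
\end{equation*}

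Third, since $\intdisc \in \Sigma_K \subset \Sigma_{K_0}$, the sample-complexity hypothesis $V \geq 4 K_0\, {\rm plog}(N, K_0)$ activates Assumption~\ref{h:rip-visibility} with distortion $\delta = 1/2$, giving the deterministic sandwich
\begin{equation*}
    \tfrac{1}{\sqrt 2}\,\varpi \sqrt{\tfrac{V}{N}}\, \norm{\intdisc}{2} \;\leq\; \norm{\bs G_0 \bs F \intdisc}{2} \;\leq\; \sqrt{\tfrac{3}{2}}\,\varpi \sqrt{\tfrac{V}{N}}\, \norm{\intdisc}{2}.
\end{equation*}
Composing the last two displays yields the desired concentration with constants $c_1' := c_1 \varpi \sqrt{V/(2N)}$ and $c_2' := c_2 \varpi \sqrt{3V/(2N)}$, and failure probability bounded by $\exp(-c_5 P)$ with $c_5 := c_3$.

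The only genuine subtlety I would flag is the separation of the two randomness sources: Assumption~\ref{h:rip-visibility} is treated as a deterministic prerequisite on the antenna geometry and its induced visibility coverage, so the only probabilistic event is the ROP concentration from Lemma~\ref{lem:27_concentration_ROP_A}, which by construction is independent of the NUFFT operator $\bs G_0 \bs F$. Consequently no union bound over $\Sigma_K$ is required here—this is a pointwise statement—and the sharper RIP over $\Sigma_{K_0}$ is invoked only to rule out the pathological case $\bs G_0 \bs F \intdisc = \bs 0$ when $\intdisc \neq \bs 0$ and to quantify the constants; the uniform version of the bound, needed for the $\RIPto$ in Prop.~\ref{prop:24_rip_rop}, is obtained in a separate covering argument on $\Sigma_{K_0}$.
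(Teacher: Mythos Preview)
Your proposal is correct and follows essentially the same route as the paper: apply Lemma~\ref{lem:27_concentration_ROP_A} to the hollowed interferometric matrix, invoke the identity~\eqref{eq:37_hollow_norms} to turn the Frobenius norm into $\norm{\bs G_0 \bs F \intdisc}{2}$, and then use Assumption~\ref{h:rip-visibility} with $\delta=1/2$ to bound this in terms of $\norm{\intdisc}{2}$, arriving at the same constants $c_1' = c_1 \varpi \sqrt{V/(2N)}$ and $c_2' = c_2 \varpi \sqrt{3V/(2N)}$. Your added remarks on the pointwise nature of the statement and the separation of randomness sources are accurate and make explicit what the paper leaves implicit.
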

\begin{proof} 
Given $\intdisc \in \Sigma_{K_0}$ and $\bc J_\rmh$ the hollow version of $\bc J = \bs D_{\bs W} \bs F \bs D_{\intdisc} \bs F^* \bs D_{\bs W}^* \in \cl H^{QB}$ where $\bs D_{\bs W}$ is the block-diagonal matrix filled with the interpolation matrices $\{ \bs W_b \}\btoB$, let us assume that~\eqref{eq:37_concentration_A} holds on $\bc J_\rmh$, an event with probability of failure smaller than $\exp(- c_3 P)$ with $c_3>0$. We first note that $\norm{\bc J_\rmh}{\rm F} = \norm{\bs G_0 \bs F \intdisc }{2}$ from~\eqref{eq:37_hollow_norms}. Second,
\begin{equation} \label{eq:37_tmp-proof-ilerop-concent}
  \tfrac{1}{2} \norm{\intdisc}{2}^2 \leq \tfrac{N}{\varpi^2 V} \norm{\bs G_0 \bs F \intdisc }{2}^2 \leq \tfrac{3}{2} \norm{\intdisc}{2}^2.
\end{equation}
since from Assumption~\ref{h:rip-visibility} the matrix $\bs G_0 \bs F$ respects the $\RIP(\Sigma_{K_0},\delta=1/2)$ as soon as $V = Q(Q-1)B \geq 4 K_0\, {\rm plog}(N,K_0)$.
Therefore, since $\bar{\bs z}^\rmc = \CRIop \intdisc = \ropA(\bc J_{\rm h})$, using~\eqref{eq:37_concentration_A} gives
\begin{align*}
\tinv{P} \norm{\CRIop \intdisc}{1} &\geq c_1 \norm{\bc J_{\rm h}}{\rm F} = c_1  \norm{\bs G_0 \bs F \intdisc }{2} \\
&\geq c_1 \varpi \sqrt{\tfrac{V}{2N}}\, \norm{\intdisc}{2}.
\end{align*}
Similarly, we get
\begin{equation*}
\tinv{P} \norm{\CRIop \intdisc}{1} \leq c_2 \varpi \sqrt{\tfrac{3V}{2N}}\, \norm{\intdisc}{2},
\end{equation*}
concluding the proof with $c_1':= c_1 \varpi \sqrt{\tfrac{V}{2N}}$ and $c_2':= c_2 \varpi \sqrt{\tfrac{3V}{2N}}$.
\end{proof}

We are now ready to prove Prop.~\ref{prop:24_rip_rop}. We will follow the standard proof strategy developed in~\cite{baraniuk08}. By homogeneity of the $\RIPto$, we restrict the proof to unit vectors $\intdisc$ of $\Sigma_\splev$, \ie $\intdisc \in \Sigma_\splev^* := \Sigma_\splev \cap \bb S^{N-1}_2$. 

Given a radius $0<\lambda < 1$, let $\cl G_\lambda \subset \Sigma^*_\splev$ be a $\lambda$ covering of $\Sigma^*_\splev$, \ie for all $\intdisc \in \Sigma^*_\splev$, there exists a $\intdisc' \in \cl G_\lambda$, with $\supp \intdisc' = \supp \intdisc$, such that $\|\intdisc - \intdisc'\| \leq \lambda$. Such a covering exists and its cardinality is smaller than ${N \choose \splev}(1 + \frac{2}{\lambda})^{\splev} \leq (\frac{3eN}{\splev\lambda})^\splev$~\cite{baraniuk08}.

Invoking Cor.~\ref{cor:27_concentration_B}, we can apply the union bound to all points of the covering so that
\begin{equation} \label{eq:37_UB-concent-ileop}
  \ \forall \intdisc' \in \cl G_\lambda,\  c_1' \leq \tinv{P} \norm{\CRIop \intdisc'}{1} \leq c_2',
\end{equation}
holds with failure probability smaller than
\begin{equation*}
    (\tfrac{3eN}{\splev\lambda})^\splev \exp(- c_3 P ) \leq \exp(\splev \ln(\tfrac{3eN}{\splev\lambda}) - c_3 P). 
\end{equation*}
Therefore, there exists a constant $C > 0$ such that, if $P \geq C \splev \ln(\frac{3eN}{\splev\lambda})$, then~\eqref{eq:37_UB-concent-ileop} holds with probability exceeding $1 - \exp(- c P)$, for some $c > 0$.

Let us assume that this event holds. Then, for any $\intdisc \in \Sigma_\splev$, 
\begin{align*}
  \tinv{P} \norm{\CRIop \intdisc}{1} & \leq \tinv{P} \norm{\CRIop \intdisc'}{1} + \tinv{P} \norm{\CRIop(\intdisc - \intdisc')}{1} \\
    &\leq c_2' + \tinv{P} \norm[\big]{ \CRIop\tfrac{\intdisc - \intdisc'}{\norm{\intdisc - \intdisc'}{2} }}{1} \norm{\intdisc - \intdisc'}{2} \\
    &\leq c_2' + \tinv{P} \norm{\CRIop \bs r}{1} \lambda,
\end{align*}
with the unit vector $\bs r := \frac{\intdisc - \intdisc'}{\|\intdisc - \intdisc'\|}$.  However, this vector $\bs r$ is itself $\splev$-sparse since $\intdisc$ and $\intdisc'$ share the same support. Therefore, applying recursively the same argument on the last term above, and using the fact that $\|\CRIop \bs w\|_1$ is bounded for any unit vector $\bs w$, we get
$\tinv{P} \|\CRIop \bs r\|_1\lambda \leq c_2' \sum_{j\geq 1} \lambda^j = \frac{\lambda}{1-\lambda} c_2'$.

Consequently, since we also have 
\begin{align*}
  \tinv{P} \|\CRIop \intdisc\|_1&\ts \geq \tinv{P} \|\CRIop \intdisc'\|_1 - \tinv{P} \|\CRIop(\intdisc - \intdisc')\|_1\\
            &\geq c_1' - \tinv{P} \|\CRIop \bs r \|_1 \lambda,
\end{align*}
we conclude that 
\begin{equation*}
    \frac{1-2\lambda}{1-\lambda} c_1' \leq \tinv{P} \|\CRIop \intdisc\|_1 \leq \frac{1}{1-\lambda} c_2',
\end{equation*}
Picking $\lambda = 1/4$ finally shows that, under the conditions described above, $\CRIop$ respects the $\RIPto(\Sigma_\splev,m_\splev,M_\splev)$ with $m_\splev > \frac{2c_1'}{3}$, and  $M_\splev < \frac{4 c_2'}{3}$.
\end{appendices}

\bibliographystyle{IEEEtran}
\bibliography{biblio}

% Generated by IEEEtran.bst, version: 1.12 (2007/01/11)
\begin{thebibliography}{10}
\providecommand{\url}[1]{#1}
\csname url@samestyle\endcsname
\providecommand{\newblock}{\relax}
\providecommand{\bibinfo}[2]{#2}
\providecommand{\BIBentrySTDinterwordspacing}{\spaceskip=0pt\relax}
\providecommand{\BIBentryALTinterwordstretchfactor}{4}
\providecommand{\BIBentryALTinterwordspacing}{\spaceskip=\fontdimen2\font plus
\BIBentryALTinterwordstretchfactor\fontdimen3\font minus
  \fontdimen4\font\relax}
\providecommand{\BIBforeignlanguage}[2]{{%
\expandafter\ifx\csname l@#1\endcsname\relax
\typeout{** WARNING: IEEEtran.bst: No hyphenation pattern has been}%
\typeout{** loaded for the language `#1'. Using the pattern for}%
\typeout{** the default language instead.}%
\else
\language=\csname l@#1\endcsname
\fi
#2}}
\providecommand{\BIBdecl}{\relax}
\BIBdecl

\bibitem{leblanc24}
O.~Leblanc, M.~Hofer, S.~Sivankutty, H.~Rigneault, and L.~Jacques,
  ``Interferometric lensless imaging: rank-one projections of image frequencies
  with speckle illuminations,'' \emph{IEEE Transactions on Computational
  Imaging}, vol.~10, pp. 208--222, 2024.

\bibitem{van_haarlem13}
M.~P. e.~a. van Haarlem, ``Lofar: The low-frequency array,'' \emph{Astronomy
  amp; Astrophysics}, vol. 556, p.~A2, Jul. 2013.

\bibitem{braun19ska}
R.~Braun, A.~Bonaldi, T.~Bourke, E.~Keane, and J.~Wagg, ``Anticipated
  performance of the square kilometre array -- phase 1 (ska1),'' 2019.

\bibitem{kartik17}
S.~{Vijay Kartik}, R.~E. Carrillo, J.~P. Thiran, and Y.~Wiaux, ``{A Fourier
  dimensionality reduction model for big data interferometric imaging},''
  \emph{Monthly Notices of the Royal Astronomical Society}, vol. 468, no.~2,
  pp. 2382--2400, 2017.

\bibitem{wijnholds18}
S.~J. Wijnholds, A.~G. Willis, and S.~Salvini, ``{Baseline-dependent averaging
  in radio interferometry},'' \emph{Monthly Notices of the Royal Astronomical
  Society}, vol. 476, no.~2, pp. 2029--2039, 2018.

\bibitem{atemkeng18}
M.~Atemkeng, O.~Smirnov, C.~Tasse, G.~Foster, A.~Keimpema, Z.~Paragi, and
  J.~Jonas, ``{Baseline-dependent sampling and windowing for radio
  interferometry: data compression, field-of-interest shaping, and outer field
  suppression},'' \emph{Monthly Notices of the Royal Astronomical Society},
  vol. 477, no.~4, pp. 4511--4523, 03 2018.

\bibitem{thyagarajan17}
N.~Thyagarajan, A.~P. Beardsley, J.~D. Bowman, and M.~F. Morales, ``{A generic
  and efficient E-field Parallel Imaging Correlator for next-generation radio
  telescopes},'' \emph{Monthly Notices of the Royal Astronomical Society}, vol.
  467, no.~1, pp. 715--730, 2017.

\bibitem{krishnan23}
H.~Krishnan, A.~P. Beardsley, J.~D. Bowman, J.~Dowell, M.~Kolopanis, G.~Taylor,
  and N.~Thyagarajan, ``{Optimization and commissioning of the EPIC commensal
  radio transient imager for the long wavelength array},'' \emph{Monthly
  Notices of the Royal Astronomical Society}, vol. 520, no.~2, pp. 1928--1937,
  2023.

\bibitem{aecal15}
O.~Öçal, P.~Hurley, G.~Cherubini, and S.~Kazemi, ``Collaborative randomized
  beamforming for phased array radio interferometers,'' in \emph{2015 IEEE
  International Conference on Acoustics, Speech and Signal Processing
  (ICASSP)}, 2015, pp. 5654--5658.

\bibitem{cai15}
T.~T. Cai, A.~Zhang \emph{et~al.}, ``Rop: Matrix recovery via rank-one
  projections,'' \emph{Annals of Statistics}, vol.~43, no.~1, pp. 102--138,
  2015.

\bibitem{chen2015exact}
Y.~Chen, Y.~Chi, and A.~J. Goldsmith, ``Exact and stable covariance estimation
  from quadratic sampling via convex programming,'' \emph{IEEE Transactions on
  Information Theory}, vol.~61, no.~7, pp. 4034--4059, 2015.

\bibitem{fenimore78}
E.~E. Fenimore and T.~M. Cannon, ``Coded aperture imaging with uniformly
  redundant arrays,'' \emph{Applied optics}, vol.~17, no.~3, pp. 337--347,
  1978.

\bibitem{wagadarikar08}
A.~Wagadarikar, R.~John, R.~Willett, and D.~Brady, ``Single disperser design
  for coded aperture snapshot spectral imaging,'' \emph{Applied optics},
  vol.~47, no.~10, pp. B44--B51, 2008.

\bibitem{koller15}
R.~Koller, L.~Schmid, N.~Matsuda, T.~Niederberger, L.~Spinoulas, O.~Cossairt,
  G.~Schuster, and A.~K. Katsaggelos, ``High spatio-temporal resolution video
  with compressed sensing,'' \emph{Opt. Express}, vol.~23, no.~12, pp.
  15\,992--16\,007, Jun. 2015.

\bibitem{vancittert34}
P.~Van~Cittert, ``Refinement of the method of coherent scattered radiations, as
  applied to radio astronomy,'' \emph{Physica}, vol.~1, no.~3, pp. 201--210,
  1934.

\bibitem{zernike38}
F.~Zernike, ``The concept of degree of coherence and its application to optical
  problems,'' \emph{Physica}, vol.~5, no.~8, pp. 785--795, 1938.

\bibitem{cornwell08}
T.~J. Cornwell, K.~Golap, and S.~Bhatnagar, ``The noncoplanar baselines effect
  in radio interferometry: The w-projection algorithm,'' \emph{IEEE Journal of
  Selected Topics in Signal Processing}, vol.~2, no.~5, pp. 647--657, 2008.

\bibitem{wiaux09}
Y.~Wiaux, L.~Jacques, G.~Puy, A.~M. Scaife, and P.~Vandergheynst, ``Compressed
  sensing imaging techniques for radio interferometry,'' \emph{Monthly Notices
  of the Royal Astronomical Society}, vol. 395, no.~3, pp. 1733--1742, 2009.

\bibitem{dabbech21}
A.~Dabbech, A.~Repetti, R.~A. Perley, O.~M. Smirnov, and Y.~Wiaux, ``{Cygnus A
  jointly calibrated and imaged via non-convex optimization from VLA data},''
  \emph{Monthly Notices of the Royal Astronomical Society}, vol. 506, no.~4,
  pp. 4855--4876, 2021.

\bibitem{VanderVeen18}
A.~J. van~der Veen, S.~J. Wijnholds, and A.~M. Sardarabadi, ``{Signal
  processing for radio astronomy},'' \emph{Handbook of Signal Processing
  Systems}, pp. 311--360, 2018.

\bibitem{thompson80vla}
A.~R. Thompson, B.~G. Clark, C.~M. Wade, and P.~J. Napier, ``The very large
  array,'' \emph{The Astrophysical Journal Supplement Series}, vol.~44, pp.
  151--167, 1980.

\bibitem{fessler03}
J.~A. Fessler and B.~P. Sutton, ``{Nonuniform fast Fourier transforms using
  min-max interpolation},'' \emph{IEEE Transactions on Signal Processing},
  vol.~51, no.~2, pp. 560--574, 2003.

\bibitem{Pratley18}
L.~Pratley, J.~D. McEwen, M.~D'Avezac, R.~E. Carrillo, A.~Onose, and Y.~Wiaux,
  ``{Robust sparse image reconstruction of radio interferometric observations
  with PURIFY},'' \emph{Monthly Notices of the Royal Astronomical Society},
  vol. 473, no.~1, pp. 1038--1058, 2018.

\bibitem{carrillo12}
R.~E. Carrillo, J.~D. McEwen, and Y.~Wiaux, ``Sparsity averaging reweighted
  analysis (sara): a novel algorithm for radio-interferometric imaging,''
  \emph{Monthly Notices of the Royal Astronomical Society}, vol. 426, no.~2,
  pp. 1223--1234, 2012.

\bibitem{terris22}
M.~Terris, A.~Dabbech, C.~Tang, and Y.~Wiaux, ``{Image reconstruction
  algorithms in radio interferometry: From handcrafted to learned
  regularization denoisers},'' \emph{Monthly Notices of the Royal Astronomical
  Society}, vol. 518, no.~1, pp. 604--622, 09 2022.

\bibitem{Thompson17}
A.~R. Thompson, J.~M. Moran, and G.~W. Swenson, \emph{Interferometry and
  Synthesis in Radio Astronomy}, 3rd~ed.\hskip 1em plus 0.5em minus 0.4em\relax
  Springer, 2017.

\bibitem{candes2006a}
E.~J. Cand{\`e}s, J.~Romberg, and T.~Tao, ``Robust uncertainty principles:
  Exact signal reconstruction from highly incomplete frequency information,''
  \emph{IEEE Transactions on information theory}, vol.~52, no.~2, pp. 489--509,
  2006.

\bibitem{foucart17}
S.~Foucart and H.~Rauhut, \emph{A mathematical introduction to compressive
  sensing}.\hskip 1em plus 0.5em minus 0.4em\relax Springer, 2017.

\bibitem{pareto}
E.~Van Den~Berg and M.~P. Friedlander, ``Probing the pareto frontier for basis
  pursuit solutions,'' \emph{Siam journal on scientific computing}, vol.~31,
  no.~2, pp. 890--912, 2009.

\bibitem{beck_book}
A.~Beck, \emph{First-order methods in optimization}.\hskip 1em plus 0.5em minus
  0.4em\relax SIAM, 2017.

\bibitem{Parikh_Boyd}
N.~Parikh, S.~Boyd \emph{et~al.}, ``Proximal algorithms,'' \emph{Foundations
  and trends in Optimization}, vol.~1, no.~3, pp. 127--239, 2014.

\bibitem{baraniuk08}
R.~Baraniuk, M.~Davenport, R.~DeVore, and M.~Wakin, ``A simple proof of the
  restricted isometry property for random matrices,'' \emph{Constructive
  approximation}, vol.~28, pp. 253--263, 2008.

\end{thebibliography}

\begin{IEEEbiography}[{\includegraphics[width=1in,height=1.25in,clip,keepaspectratio]{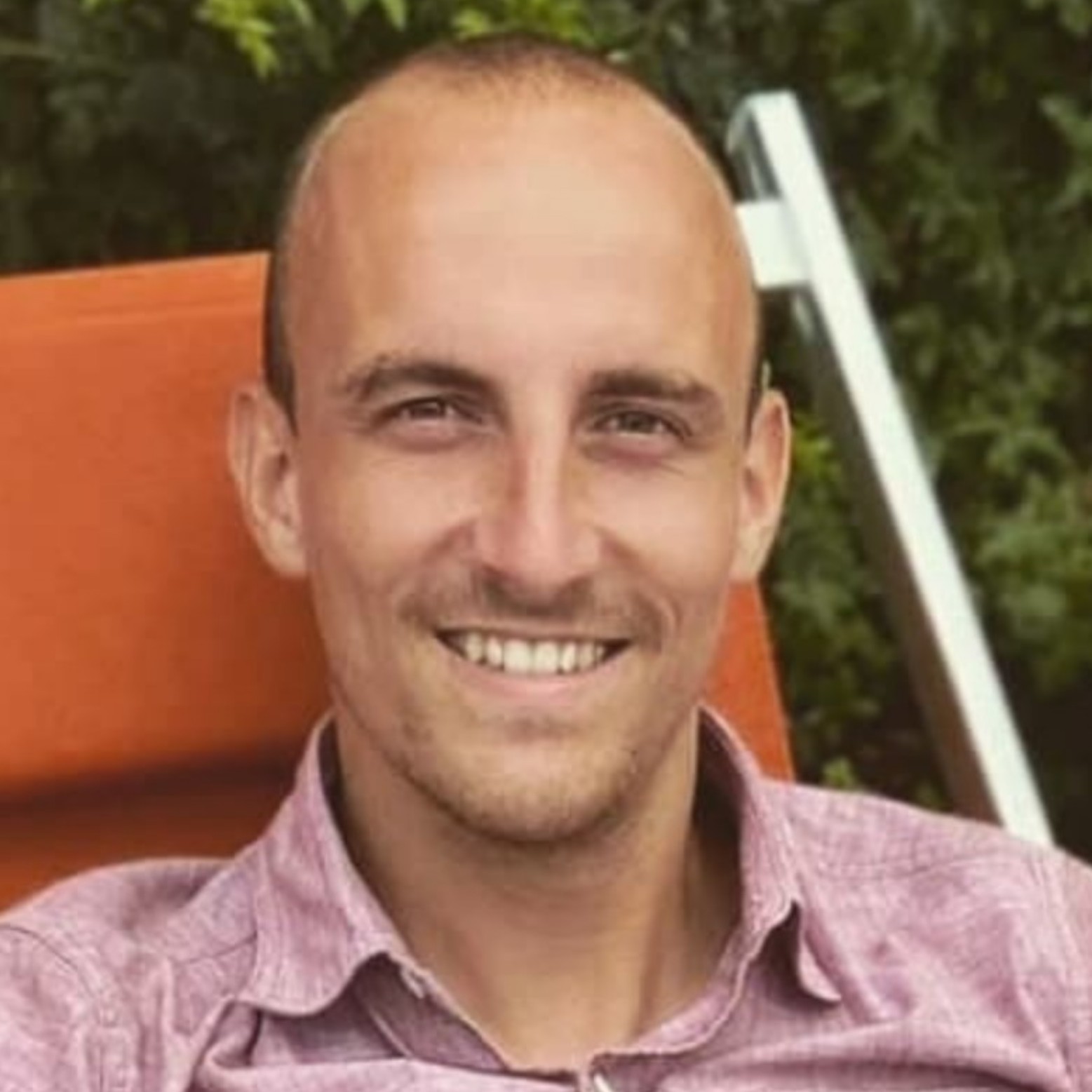}}]{Olivier Leblanc} received the B.Sc., M.Sc., and Ph.D. degrees in electrical engineering from the Mathematical Engineering (INMA) department, ICTEAM/UCLouvain, Louvain-la-Neuve, Belgium, in 2018, 2020, and 2024 respectively. His research interests include computational imaging and compressive sensing (\url{https://olivierleblanc.github.io} for more information).
\end{IEEEbiography}

\begin{IEEEbiography}[{\includegraphics[width=1in,height=1.25in,clip,keepaspectratio]{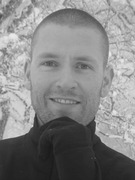}}]{Dr. Yves Wiaux} is a Professor of Imaging Sciences at Heriot-Watt University Edinburgh. He is also an Academic Guest at EPFL and a Honorary Fellow at the University of Edinburgh. He leads the Biomedical and Astronomical Signal Processing Laboratory (BASP; \url{https://basp.site.hw.ac.uk/}), developing cutting-edge research in computational imaging, from theory and algorithms to applications in astronomy and medicine. He is also the chair of the BASP Frontiers Conference series, aiming at promoting methodological synergies between astronomical and medical imaging.
\end{IEEEbiography}

\begin{IEEEbiography}[{\includegraphics[width=1in,height=1.25in,clip,keepaspectratio]{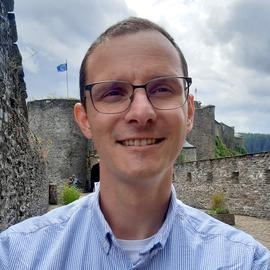}}]{Laurent Jacques} received the B.Sc., M.Sc., and Ph.D. degrees in mathematical physics in 1996, 1998, and 2004, respectively. He has been a FNRS Research Associate from 2012 till 2022 with Image and Signal Processing Group, ICTEAM, UCLouvain, Louvain-La-Neuve, Belgium, and is now professor at the same place. His research interests include sparse signal representations, compressive sensing theory and applications, and computational sensing (\url{https://laurentjacques.gitlab.io} for more information).
\end{IEEEbiography}

\end{document}